\spnewtheorem{observation}[lemma]{Observation}{\bfseries}{\itshape}
\title{Limitations on Realistic\\ Hyperbolic Graph Drawing}
\author{David Eppstein}
\authorrunning{D. Eppstein}
\institute{Computer Science Department\\
University of California, Irvine\\
Irvine, CA 92697, USA\\
\email{eppstein@uci.edu}}
\date{ }
\begin{document}
\maketitle  

\begin{abstract}
We show that several types of graph drawing in the hyperbolic plane require features of the drawing to be separated from each other by sub-constant distances, distances so small that they can be accurately approximated by Euclidean distance. Therefore, for these types of drawing, hyperbolic geometry provides no benefit over Euclidean graph drawing.

\keywords{Hyperbolic graph drawing, Realistic graph drawing, Vertex-edge resolution, Vertex-vertex resolution, Angular resolution}
\end{abstract}

\section{Introduction}

Although most graph drawing algorithms place vertices and edges in the Euclidean plane, several past works instead use a different geometry, the hyperbolic plane. Beginning in the 1990s, researchers proposed hyperbolic graph drawings to combine focus and context: the fisheye-like view provided by the Poincar\'e disk visualization of the hyperbolic plane allows parts of the drawing to be shown in an expanded view, with the rest compressed into the margins of the Poincar\'e disk but remaining entirely visible~\cite{LamRao-JVLC-96}. This line of research also includes similar techniques using three-dimensional hyperbolic geometry~\cite{MunBur-VRML-95,Mun-InfoVis-97,Mun-CGA-98,TurBal-JCLC-14}. The hyperbolic plane has also been used for greedy graph drawings, with the property that a path to any vertex can be found by always moving to a neighboring vertex that is closer to the eventual destination. Unlike the Euclidean plane, the hyperbolic plane allows such drawings for any graph~\cite{Kle-INFOCOM-07,EppGoo-TC-11,BlaFriKat-JEA-20}. Hyperbolic geometry was central to our construction of Lombardi drawings for graphs of maximum degree three~\cite{Epp-DCG-14} and for Halin graphs~\cite{DunEppGoo-JGAA-12}. We have also developed algorithms for finding a good choice of initial views in hyperbolic visualizations~\cite{BerEpp-WADS-01}, and used spring embedding techniques to find high-quality hyperbolic graph drawings~\cite{KobWam-TVCG-05}. Other investigations of hyperbolic graph drawing include the use of circle packings to construct hyperbolic drawings~\cite{Moh-GD-99}, interactive systems using hyperbolic drawing~\cite{EklRobGre-CW-02,WalRit-KDD-02}, hyperbolic drawing of power-law graphs~\cite{BlaFriKro-TN-18}, hyperbolic Euler diagrams~\cite{SuzTakOno-ICML-19}, distance distortion of hyperbolic embeddings~\cite{Sar-GD-11,VerSur-CGTA-16,CouDuc-Eurocomb-17}, and hyperbolic multidimensional scaling~\cite{SalDeSGu-ICML-18,Wal-IS-04}.

In this paper, we investigate hyperbolic geometry from the point of view of \emph{realistic graph drawing}. This type of analysis, previously applied to Euclidean graph drawing~\cite{BarGooRil-JGAA-04,DunEfrKob-IJGCS-06}, treats the vertices and edges of a graph drawing as having nonzero radius or thickness, rather than being idealized mathematical points and curves, so that they can be seen by a reader of the drawing. This required thickness has been formulated mathematically in several related ways, including placing constraints on the vertex-vertex resolution (the minimum distance between center points of vertices) or vertex-edge resolution (the minimum distance of the center point of any vertex from an edge that it is not an endpoint of). In ink-based \emph{bold graph drawing} methods all features of the drawing must have visible parts that are not covered by other features, so that the graph may be unambiguously determined from its drawing~\cite{Kre-CGTA-11,Pac-JGAA-15}. These parameters are also related to \emph{angular resolution}, the sharpest angle between two edges incident at the same vertex, as edges forming sharp angles need high length to be visibly separated from each other~\cite{ForHagHar-SICOMP-93,GarTam-ESA-94,MalPap-SIDMA-94}.

Any Euclidean drawing can be scaled to achieve constant vertex-vertex or vertex-edge resolution, so these parameters are typically compared against the area of a bounding box of the drawing. A drawing style is considered to be good when it  achieves polynomial area, and bad when the area is exponential~\cite{DunEppGoo-DCG-13}. But in hyperbolic graph drawing, there is an absolute length scale, and it is not possible to rescale a drawing without changing its shape. This hyperbolic length scale is essential to focus+context applications of hyperbolic visualizations, as it controls the sizes of objects near the center of the visualization, relative to the overall view. In greedy drawings, constant vertex separation in this absolute length scale is necessary, because drawings with smaller vertex distances would be approximately Euclidean, constraining greedy drawings to have bounded vertex degree. More generally, parts of a hyperbolic graph drawing with features significantly smaller than the unit of absolute length would be approximately Euclidean, failing to take advantage of any differences between hyperbolic and Euclidean geometry. Therefore, we will define a realistic hyperbolic drawing to be one in which the resolution parameters of the realistic graph drawing model, such as vertex-vertex resolution, vertex-edge resolution, or line thickness, are at least constant in absolute length. We consider realistic drawing to be impossible when these parameters are forced by the constraints of the drawing to be $o(1)$.

Our work shows that this realistic model of hyperbolic graph drawing is severely limited, providing a partial explanation of the failure of hyperbolic approaches to  focus+context to come into wider use. In particular, we prove:

\begin{itemize}
\item Every straight-line crossing-free drawing of an $n$-vertex maximal planar graph in the hyperbolic plane has vertex-edge resolution $O(1/\sqrt{n})$ (\cref{thm:all-max-planar-have-small-ve-res}). Moreover, there exist $n$-vertex planar graphs (the well-known nested triangle graphs) for which every straight-line crossing-free drawing has vertex-edge resolution $O(1/n)$ (\cref{thm:nested-ve-res}). Both bounds are tight.

\item Although planar graphs have hyperbolic drawings with high vertex-vertex resolution, these drawings have exponentially small angular resolution for all maximal planar graphs (\cref{thm:max-planar-vv-angle}). This differs from Euclidean drawings, whose angular resolution is bounded by a function of the degree~\cite{MalPap-SIDMA-94}.

\item Simple structure is not enough to avoid these problems: some series-parallel graphs of bounded bandwidth require polynomially small vertex-edge resolution and either small vertex-vertex resolution or exponentially small angular resolution (\cref{thm:serpar}). Grid graphs also obey similar bounds (\cref{thm:grid}).

\item Beyond planar graph drawing, every $n$-vertex graph has a Euclidean drawing with unit vertex-vertex resolution and angular resolution $\Theta(1/n)$. However, we prove that for hyperbolic drawings with unit vertex-vertex resolution, some graphs require angular resolution $O(1/n^2)$ (\cref{thm:complete-angles}) and their bold drawings require edge width $O(1/n)$ (\cref{thm:complete-bold}).\end{itemize}

\section{Vertex-edge resolution}

\begin{figure}[t]
\centering\includegraphics[width=0.5\textwidth]{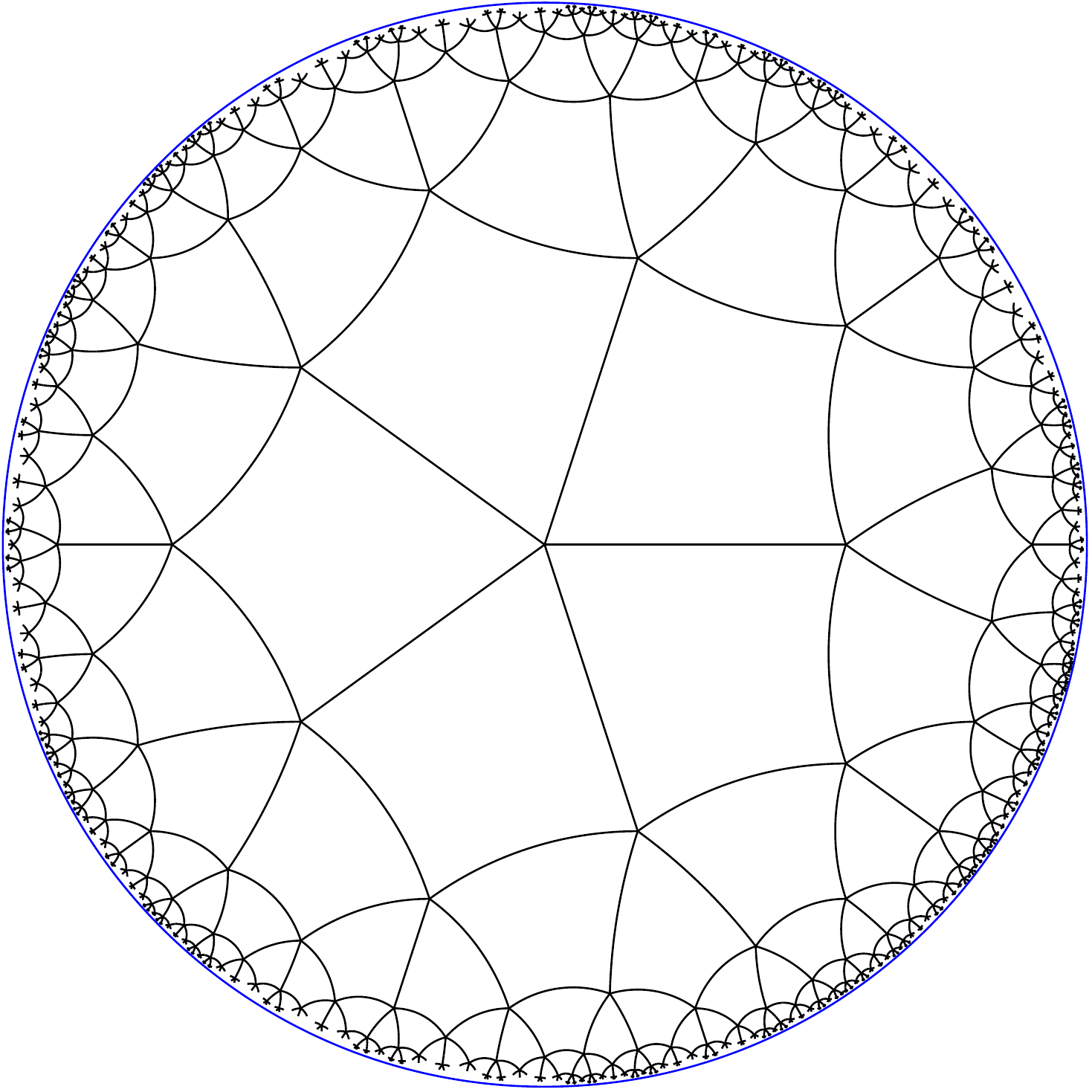}
\caption{A tessellation of the Poincar\'e disk model of the hyperbolic plane by squares. Arbitrarily large subgraphs of this graph have constant vertex-edge resolution, unlike the maximal planar graphs. Although the internal faces of this tessellation can be triangulated, it is impossible to add edges that make the outer face triangular without violating planarity.}
\label{fig:45tess}
\end{figure}

In this section, we examine the vertex-edge resolution (the minimum hyperbolic distance between a vertex and an unrelated edge) of graphs, drawn in the hyperbolic plane with (hyperbolically) straight edges. There exist planar graphs that can be drawn in this way with at least constant vertex-edge resolution (\cref{fig:45tess}). However, as we show, this is not the case for any maximal planar graph.

\begin{lemma}
\label{lem:max-triangle-area}
Every hyperbolic triangle has area at most $\pi$.
\end{lemma}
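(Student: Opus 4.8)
The plan is to prove the bound via the Gauss–Bonnet theorem, which directly relates the area of a hyperbolic triangle to its angle sum. In the hyperbolic plane (with curvature normalized to $-1$), the Gauss–Bonnet theorem states that for a geodesic triangle with interior angles $\alpha$, $\beta$, and $\gamma$, the area equals the \emph{angle defect} $\pi - (\alpha + \beta + \gamma)$. This identity is the workhorse of the whole argument; given it, the lemma is essentially immediate.

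First I would recall or cite the Gauss–Bonnet formula so that the area of any geodesic triangle is expressed as $\pi - (\alpha + \beta + \gamma)$. Then I would observe that the three interior angles are each nonnegative, so their sum $\alpha + \beta + \gamma \ge 0$. Substituting this into the area formula yields
\begin{equation*}
\mathrm{area} = \pi - (\alpha + \beta + \gamma) \le \pi,
\end{equation*}
which is exactly the claimed bound. I would note that the bound is approached but not attained in the limit of an \emph{ideal triangle}, whose three vertices all lie on the boundary circle at infinity so that all three angles degenerate to $0$; this limiting case gives area exactly $\pi$ and shows the bound is tight.

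The only real subtlety, rather than a true obstacle, is justifying the Gauss–Bonnet identity itself in the normalization being used. Since the paper works with an absolute hyperbolic length scale, I would make explicit that curvature is $-1$ so that the area equals the angle defect with no scaling constant; under a different curvature normalization the bound would be rescaled accordingly. Because Gauss–Bonnet is a standard result in hyperbolic geometry, I expect to invoke it by reference rather than re-derive it, so the proof should be only a few lines. If a self-contained derivation were wanted, the harder part would be establishing the area of an ideal triangle and then decomposing a general triangle into such pieces (or integrating the area form directly in the Poincaré disk model), but for the purposes of this lemma citing Gauss–Bonnet is both sufficient and standard.
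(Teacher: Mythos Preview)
Your argument is correct and matches the paper's proof: both invoke the standard area formula $\mathrm{area} = \pi - (\alpha + \beta + \gamma)$ for a hyperbolic triangle and conclude immediately from nonnegativity of the angles. The paper states this in a single sentence without naming Gauss--Bonnet explicitly, but the content is identical.
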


\begin{proof}
This follows from the well-known formula for the area of a hyperbolic triangle as $\pi-\sum\theta_i$, where $\theta_i$ are the internal angles of the triangle.
\end{proof}

\begin{lemma}
\label{lem:small-area-triangle}
In a planar straight-line hyperbolic drawing of an $n$-vertex maximal planar graph, at least one face has area $\le \frac{\pi}{2n-3}$.
\end{lemma}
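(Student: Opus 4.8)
The plan is to derive the bound from \cref{lem:max-triangle-area} together with Euler's formula and the pigeonhole principle. Since the graph is maximal planar, its straight-line drawing is a triangulation: for $n \ge 3$ every face is a triangle, and in particular the outer face is bounded by a $3$-cycle. The first step is to observe that, because the drawing is crossing-free, the three outer vertices joined by (hyperbolically) straight edges form a single hyperbolic triangle whose interior is exactly the disjoint union of all the bounded faces of the drawing.

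The crucial step is then to bound the total area. Because the bounded faces tile the interior of the outer triangle without overlap, the sum of their areas equals the area of that one outer triangle, which by \cref{lem:max-triangle-area} is at most $\pi$. This is what makes the argument work: bounding each face separately would be useless, since each is only known to have area at most $\pi$, but summing them telescopes into a single triangle whose area is still at most $\pi$, so the total area to be shared among \emph{all} bounded faces is only $\pi$.

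Next I would count the bounded faces. By Euler's formula a maximal planar graph on $n$ vertices has $3n-6$ edges and $2n-4$ faces; discarding the single unbounded outer face leaves a number of bounded faces that is linear in $n$. Dividing the total area, at most $\pi$, among these linearly many faces and applying the pigeonhole principle produces a bounded face whose area is at most $\pi$ divided by the number of bounded faces, a quantity of order $\pi/n$ matching the claimed form $\tfrac{\pi}{2n-3}$, with the precise denominator read off from the exact face count. (As a cross-check, the same total can be computed through the Gauss--Bonnet viewpoint: summing $\pi$ minus the angle sum over all bounded faces, and using that the interior angles close up to $2\pi$ around each interior vertex, reproduces the outer triangle's area.)

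The argument is short, so I do not expect a hard estimate to be the obstacle; instead the delicate part is the boundary bookkeeping. I must verify that the bounded faces really do partition the outer triangle exactly, with no double-counting and no leftover region, and that the outer face is genuinely a single triangle in every crossing-free straight-line drawing of a triangulation, rather than a more complicated outer boundary. Some care is likewise needed to count the bounded faces correctly so that the linear coefficient in the denominator comes out right.
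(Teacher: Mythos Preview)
Your approach is exactly the paper's: bound the outer triangle's area by $\pi$ via \cref{lem:max-triangle-area}, observe that the bounded faces partition it, and apply the pigeonhole principle. One bookkeeping remark: your face count correctly gives $2n-5$ bounded faces, hence $\pi/(2n-5)$ rather than the stated $\pi/(2n-3)$; the paper's own proof writes ``$2n-3$'' as well, and the discrepancy is immaterial for the $O(1/n)$ application downstream.
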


\begin{proof}
By \cref{lem:max-triangle-area} the exterior face has area $\le\pi$. The remaining $2n-3$ faces partition this area into disjoint subsets, one of which must be $\le \pi/(2n-3)$.
\end{proof}

\begin{lemma}[tangent rule for hyperbolic right triangles]
\label{lem:tangent-rule}
If a hyperbolic right triangle has legs of length $x$ and $y$, the angle $\theta$ opposite $x$ satisfies
\[ \tan\theta=\frac{\tanh{x}}{\sinh{y}}. \]
\end{lemma}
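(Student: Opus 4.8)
The plan is to work directly in the Poincar\'e disk model, exploiting two of its features: it is conformal, so hyperbolic angles coincide with Euclidean angles, and geodesics through the center are Euclidean diameters. First I would place the right-angle vertex $C$ at the center of the disk, with its two legs running along perpendicular diameters: the leg of length $y$ (adjacent to $\theta$) along the positive real axis and the leg of length $x$ (opposite $\theta$) along the positive imaginary axis. Using the standard fact that a point at hyperbolic distance $r$ from the center lies at Euclidean distance $\tanh(r/2)$, this puts the two remaining vertices at $A=\tanh(y/2)$ and $B=i\tanh(x/2)$, with $A$, where $\theta$ is measured, on the real axis.

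The key geometric step is to determine the geodesic joining $A$ and $B$, which is the Euclidean circle through these two points that meets the unit circle orthogonally. Writing the orthogonality condition $|w|^2=1+\rho^2$ for a candidate circle of center $w$ and radius $\rho$, and substituting the two point-membership conditions, the radius $\rho$ cancels and the center separates into independent coordinates, giving $w=\bigl(\tfrac{a^2+1}{2a},\tfrac{b^2+1}{2b}\bigr)$ with $a=\tanh(y/2)$ and $b=\tanh(x/2)$. The tangent to this geodesic at $A$ is perpendicular to the radius $A-w$, while the direction from $A$ toward $C$ is simply the negative real axis, so $\theta$ is the Euclidean angle between these two directions. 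Taking the ratio of the cross product to the dot product (legitimate since $\theta$ is acute) yields $\tan\theta=\tfrac{b(1-a^2)}{a(1+b^2)}$.

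Finally I would convert back to the full lengths $x$ and $y$ through the half-angle identities $\sinh y=\tfrac{2a}{1-a^2}$ and $\tanh x=\tfrac{2b}{1+b^2}$, which recast the previous expression immediately as $\tanh x/\sinh y$, as claimed. I expect the main obstacle to be computational rather than conceptual: solving cleanly for the orthogonal geodesic circle and then reducing the tangent-of-angle expression through the half-angle substitutions, while tracking signs so that the tangent direction points along the arc toward $B$. An alternative, should the coordinate algebra prove unwieldy, is to assemble the identity from the standard hyperbolic right-triangle relations $\cosh c=\cosh x\cosh y$, $\sin\theta=\sinh x/\sinh c$, and $\cos\theta=\tanh y/\tanh c$, whose quotient collapses to the same formula after substituting the hyperbolic Pythagorean theorem; but I would prefer the direct model computation, as it is entirely self-contained.
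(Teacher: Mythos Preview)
Your argument is correct. The computation in the Poincar\'e disk model goes through exactly as you outline: the center of the circle orthogonal to the boundary through $A=a$ and $B=ib$ indeed separates into $\bigl((a^2+1)/(2a),(b^2+1)/(2b)\bigr)$, the tangent-to-radius angle at $A$ gives $\tan\theta=b(1-a^2)/\bigl(a(1+b^2)\bigr)$, and the half-angle identities $\sinh y=2a/(1-a^2)$ and $\tanh x=2b/(1+b^2)$ convert this to $\tanh x/\sinh y$. The acuteness remark is justified, since in a hyperbolic right triangle the two non-right angles sum to less than $\pi/2$.

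The paper, by contrast, does not prove this lemma at all: it simply cites Martin's textbook (Corollary~32.13). So your proof is genuinely more than the paper supplies. Of your two suggested routes, the alternative one---dividing the standard relations $\sin\theta=\sinh x/\sinh c$ and $\cos\theta=\tanh y/\tanh c$ and invoking the hyperbolic Pythagorean theorem $\cosh c=\cosh x\cosh y$---is closer in spirit to a textbook treatment and is shorter, requiring no coordinate model at all; the disk-model computation you prefer has the virtue of being self-contained from first principles, at the cost of the algebra you anticipate. Either is a legitimate replacement for the bare citation.
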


\begin{proof}
See \cite[Corollary 32.13, p. 431]{Mar-82}.
\end{proof}

\begin{lemma}
\label{lem:right-area}
A hyperbolic right triangle with leg lengths $x\le y\le 1$ has area  $\Theta(xy)$.
\end{lemma}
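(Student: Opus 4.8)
The plan is to express the triangle's area as its angular defect and reduce the whole problem to a single explicit rational function of hyperbolic functions of the two legs. Writing $\alpha$ for the angle opposite the leg of length $x$ and $\beta$ for the angle opposite $y$, the area-equals-defect formula used in \cref{lem:max-triangle-area} gives $A = \pi/2 - \alpha - \beta$, and the tangent rule (\cref{lem:tangent-rule}) gives $\tan\alpha = \tanh x/\sinh y$ and $\tan\beta = \tanh y/\sinh x$.

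Before computing I would flag the main obstacle: one cannot simply approximate the two angles to first order. For small legs $\tan\alpha \approx x/y$ and $\tan\beta \approx y/x$ are reciprocals, so $\alpha + \beta$ is very close to $\pi/2$ and the area $A$ is precisely the second-order quantity that survives this cancellation. Any bound that estimates $\alpha$ and $\beta$ separately and subtracts will lose all the information. I would sidestep this by carrying an exact closed form through the cancellation.

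Concretely, I would apply the cotangent addition formula to $A = \pi/2 - (\alpha+\beta)$, noting the convenient identity $\tan\alpha\,\tan\beta = 1/(\cosh x\cosh y)$, and simplify to
\[ \tan A = \frac{(\cosh x\,\cosh y - 1)\,\sinh x\,\sinh y}{\sinh^2 x\,\cosh y + \sinh^2 y\,\cosh x}. \]
The point of this identity is that both numerator and denominator are manifestly positive and involve no subtraction of nearly-equal terms, so the delicate cancellation has been resolved symbolically.

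The remaining estimate is then routine interval arithmetic on $0 < x \le y \le 1$. On $[0,1]$ one has $\sinh t = \Theta(t)$ and $\cosh t = \Theta(1)$, while $1 \le \cosh x \le \cosh y$ sandwiches $\cosh x\cosh y - 1$ between $\cosh y - 1$ and $\cosh^2 y - 1 = \sinh^2 y$, both $\Theta(y^2)$. Hence the numerator is $\Theta(xy^3)$ and the denominator is $\Theta(x^2 + y^2) = \Theta(y^2)$ (using $x \le y$), so $\tan A = \Theta(xy)$. Since $xy \le 1$, this keeps $\tan A$, and therefore $A$, bounded away from $\pi/2$, so $\tan A = \Theta(A)$ on the whole range, and I conclude $A = \Theta(xy)$.
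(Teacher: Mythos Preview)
Your argument is correct and follows the same outline as the paper: write the area as the angular defect $\pi/2-\alpha-\beta$ and feed in the tangent rule for the two non-right angles. The paper's proof is only a sketch, deferring the analysis to power-series expansions and a textbook reference; you instead push the computation through the cotangent-addition identity to obtain an exact closed form for $\tan A$, which resolves the near-$\pi/2$ cancellation symbolically rather than term-by-term. That is a cleaner and fully self-contained execution of the same idea, and your interval-arithmetic finish (in particular the sandwich $\cosh y - 1 \le \cosh x\cosh y - 1 \le \sinh^2 y$) is sound.
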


\begin{proof}
This follows by expressing the area as $\pi$ minus the sum of angles, expressing these angles in terms of $x$ and $y$ according to \cref{lem:tangent-rule}, replacing these expressions by their power expansions, and omitting lower-order terms; see \cite[p. 434]{Mar-82}.
\end{proof}

\begin{lemma}
\label{lem:height-to-area}
Let $h$ be the height (minimum distance from any vertex to the opposite edge) of a hyperbolic triangle $T$. Then $T$ has area $\Omega\bigl(\min(1,h^2)\bigr)$.
\end{lemma}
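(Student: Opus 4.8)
The plan is to exhibit, inside $T$, a hyperbolic right triangle whose two legs are both $\Theta(\min(1,h))$, so that \cref{lem:right-area} gives it area $\Theta(\min(1,h^2))$; since area is monotone under inclusion and $T$ is convex, this bounds the area of $T$ from below by the same amount.

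First I would choose the vertex to work from. Let $BC$ be a longest side of $T$ and $A$ the opposite vertex. Because the longest side is opposite the largest angle, and two angles of a hyperbolic triangle cannot both be at least $\pi/2$ (their sum already exceeds the total angle, which is less than $\pi$), the angles at $B$ and $C$ are both acute. Consequently the foot $F$ of the perpendicular from $A$ to the line through $BC$ lies strictly inside the segment $BC$, and this perpendicular splits $T$ into two right triangles $ABF$ and $ACF$ meeting along $AF$.

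Next I would extract the two leg lengths. Write $h_A=|AF|$ for the altitude from $A$; since $h$ is the minimum over all three altitudes, $h_A\ge h$. The altitude from $A$ is no longer than either side incident to $A$, so $h_A\le|AB|\le|BC|$, i.e.\ the longest side has length at least $h_A$. Because $|BF|+|FC|=|BC|$, the larger of the two, say $|BF|$, satisfies $|BF|\ge|BC|/2\ge h_A/2$. Thus the right triangle $ABF$ has its right angle at $F$ and legs $|AF|=h_A$ and $|BF|\ge h_A/2$. Finally I would cap the legs so that \cref{lem:right-area} applies, which requires both legs to be at most $1$. Put $\ell=\min(1,h_A)$, choose $A'$ on segment $FA$ with $|FA'|=\ell$ (possible since $|FA|=h_A\ge\ell$) and $B'$ on segment $FB$ with $|FB'|=\ell/2$ (possible since $|FB|\ge h_A/2\ge\ell/2$). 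The triangle $A'FB'$ is a right triangle with legs $\ell$ and $\ell/2$, both at most $1$, so by \cref{lem:right-area} its area is $\Theta(\ell^2)=\Theta(\min(1,h_A^2))$. All three of $A',F,B'$ lie in the convex triangle $T$, so $A'FB'\subseteq T$ and $\operatorname{area}(T)\ge\operatorname{area}(A'FB')=\Omega(\min(1,h_A^2))$; since $h_A\ge h$ gives $\min(1,h_A^2)\ge\min(1,h^2)$, we conclude $\operatorname{area}(T)=\Omega(\min(1,h^2))$.

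The main obstacle, I expect, is the second step: a single altitude pins down only one leg of the right triangle directly, and the whole argument hinges on also forcing the second leg to be $\Omega(h)$. Choosing the vertex opposite the longest side is what makes this work, since it simultaneously guarantees an interior foot (so that the perpendicular genuinely decomposes $T$ into right triangles) and supplies the inequality $|BC|\ge h_A$ that turns $|BF|\ge|BC|/2$ into $|BF|\ge h_A/2$. The capping at $\ell$ is then a routine device to meet the hypotheses of \cref{lem:right-area}, and the convexity of hyperbolic triangles handles the containment.
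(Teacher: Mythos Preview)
Your proof is correct and follows essentially the same strategy as the paper: inscribe in $T$ a right triangle whose legs are both $\Theta(\min(1,h))$, then invoke \cref{lem:right-area}. The only real difference is the choice of apex. The paper drops the altitude from the vertex \emph{achieving} the minimum height $h$ and argues, tersely, that the opposite side must extend at least $h/2$ to each side of the foot ``in order to avoid smaller height at another vertex''; you instead drop the altitude from the vertex opposite the \emph{longest} side, which immediately forces both base angles to be acute (hence an interior foot) and supplies the inequality $|BC|\ge h_A$ that makes the longer base sub-segment at least $h_A/2$. Your version is more explicit about why the foot is interior and about the capping at $1$ needed for \cref{lem:right-area}, but the two arguments are minor variants of the same decomposition.
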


\begin{proof}
Let height $h$ be achieved by a line segment from vertex $v$ to the opposite side $S$. In order avoid smaller height at another vertex, $S$ extends for distance at least $h/2$ on either side of this segment. The result follows by applying \autoref{lem:right-area} to the right triangles formed by the segment of length $h$ and the perpendicular segments of $S$ of length $h/2$. They are disjoint and lie entirely within $T$, so their total area lower-bounds that of $T$.
\end{proof}

\begin{theorem}
\label{thm:all-max-planar-have-small-ve-res}
Every straight-line planar hyperbolic drawing of an $n$-vertex maximal planar graph has vertex-edge resolution $O(1/\sqrt{n})$.
\end{theorem}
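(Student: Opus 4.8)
The plan is to read \cref{lem:height-to-area} in reverse and feed it the small-area face guaranteed by \cref{lem:small-area-triangle}: a face of small area must be thin, and a thin triangular face exhibits some vertex lying close to a non-incident edge. First I would invoke \cref{lem:small-area-triangle} to obtain a bounded face $T$ of area at most $\pi/(2n-3)$. Since the graph is maximal planar and the edges are drawn as geodesic segments, every bounded face is a hyperbolic geodesic triangle, so \cref{lem:height-to-area} applies to $T$.

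Next I would invert that lemma. It supplies a constant $c>0$ with $\operatorname{area}(T)\ge c\cdot\min(1,h^2)$, where $h$ is the height of $T$. Combined with the area bound this gives $c\cdot\min(1,h^2)\le\pi/(2n-3)$. For all sufficiently large $n$ the right-hand side drops below $1$, so the truncation is inactive, $\min(1,h^2)=h^2$, and hence $h\le\sqrt{\pi/\bigl(c(2n-3)\bigr)}=O(1/\sqrt n)$.

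It remains to identify $h$ as a vertex-edge distance. By definition the height is realized by a segment from some vertex $v$ of $T$ to the opposite side $e$; because $T$ is a triangle, $v$ is not an endpoint of $e$, so $(v,e)$ is an admissible vertex-edge pair. The vertex-edge resolution, being the minimum over all such pairs, is therefore at most $h=O(1/\sqrt n)$, as claimed. The only point requiring care is the direction of \cref{lem:height-to-area}, which bounds area below in terms of height; using it to bound height above in terms of area is legitimate here precisely because the $\min(1,h^2)$ truncation becomes inactive once $n$ is large. I do not expect any substantive obstacle beyond this bookkeeping.
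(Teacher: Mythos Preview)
Your proposal is correct and follows essentially the same approach as the paper's own proof: invoke \cref{lem:small-area-triangle} to get a face of area $O(1/n)$, then apply \cref{lem:height-to-area} in the contrapositive direction to bound the height by $O(1/\sqrt{n})$, and finally observe that the height of a triangle is a vertex--edge distance for a non-incident pair. Your version is simply more explicit about the bookkeeping (the truncation in $\min(1,h^2)$ and the fact that the opposite side of a triangle is non-incident to the vertex), but the logical skeleton is identical.
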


\begin{proof}
By \cref{lem:small-area-triangle}, at least one face has area $O(1/n)$. By \cref{lem:height-to-area}, the height of this face is $O(1/\sqrt{n})$. Therefore, this face has a vertex and non-incident edge that are at distance $O(1/\sqrt{n})$ from each other.
\end{proof}

\cref{thm:all-max-planar-have-small-ve-res} is tight: some $n$-vertex maximal planar graphs have vertex-edge resolution $O(1/\sqrt n)$. For instance, obtain $G$ from a square grid graph by triangulating each square and adding three surrounding vertices to form an outer face, shrink the drawing to have unit radius, and draw it within a unit disk of the Klein model of the hyperbolic plane (which preserves straight line drawings) giving a hyperbolic drawing whose vertex-edge resolution is the  scale factor, $O(1/\sqrt n)$.

\begin{figure}[t]
\centering\includegraphics[width=0.4\textwidth]{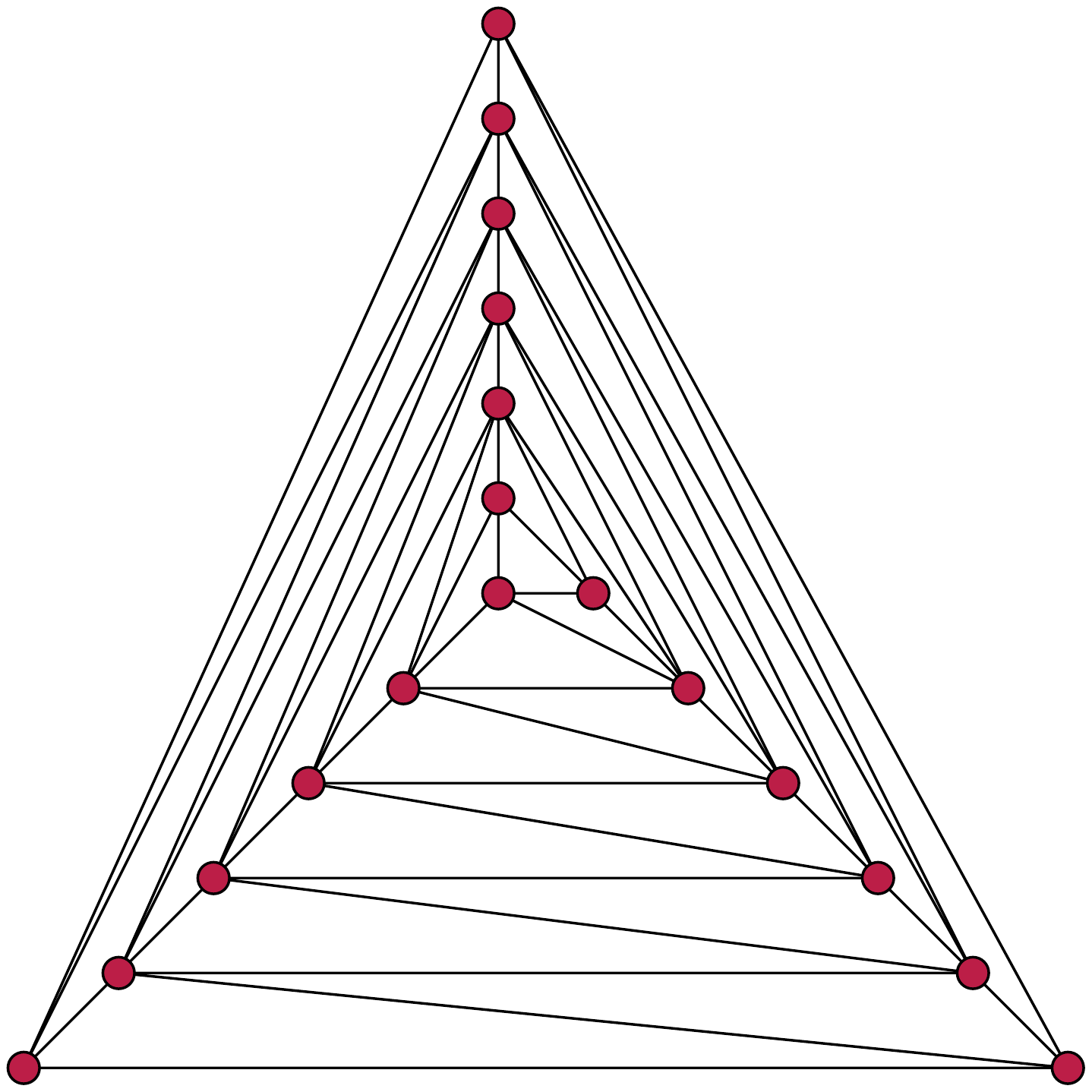}
\caption{A maximal planar version of the nested triangles graph, for which any planar straight-line hyperbolic drawing has vertex-edge resolution $O(1/n)$.}
\label{fig:nested}
\end{figure}

To strengthen \cref{thm:all-max-planar-have-small-ve-res} for some graphs, we use a maximal planar version of the \emph{nested triangles graph} (\cref{fig:nested}), formed from $n/3$ nested triangles by adding edges between consecutive triangles to make the graph maximal planar.

\begin{theorem}
\label{thm:nested-ve-res}
If the maximal planar nested triangles graph on $n$ vertices is given a straight-line planar drawing in the hyperbolic plane, then the drawing must have vertex-edge resolution $O(1/n)$.
\end{theorem}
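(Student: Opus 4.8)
The plan is to exploit the bounded total area together with the rigidly nested structure of the drawing. Since the maximal planar nested triangles graph is $3$-connected, its planar embedding is fixed up to reflection, so in any straight-line drawing the $n/3$ triangles appear as nested convex geodesic triangles $T_1\supset T_2\supset\cdots\supset T_{n/3}$, with the whole drawing lying inside $T_1$. By \cref{lem:max-triangle-area} the area enclosed by $T_1$, and hence the total area of all the annular regions between consecutive triangles, is at most $\pi$. Writing $A_i$ for the area of the annulus between $T_i$ and $T_{i+1}$, we thus have $\sum_i A_i\le\pi$. Let $w$ denote the vertex-edge resolution; the goal is to show $w=O(1/n)$.

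The heart of the argument is that the nesting forces the perimeters to grow from the inside out, so the annuli become longer and longer and at least one of them must be thin. Let $\delta_i$ be the clearance (minimum distance) between $T_i$ and $T_{i+1}$, and let $p_i$ be the perimeter of $T_i$. Two estimates, both read off from the length formula $L(t)=p\cosh t+\kappa\sinh t$ for the curve at outward distance $t$ from a convex region (with total turning $\kappa=2\pi+\mathrm{area}\ge2\pi$ by Gauss--Bonnet), drive everything. First, the annulus contains the outward collar of width $\delta_i$ around $T_{i+1}$, so $A_i\ge p_{i+1}\sinh\delta_i$. Second, $T_i$ encloses that collar, so $p_i\ge p_{i+1}\cosh\delta_i+2\pi\sinh\delta_i\ge p_{i+1}+2\pi\sinh\delta_i$. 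Telescoping the second estimate gives $p_{i+1}\ge 2\pi\sum_{j>i}\sinh\delta_j$, and substituting into the first and summing yields $\pi\ge\sum_iA_i\ge 2\pi\sum_{i<j}\sinh\delta_i\sinh\delta_j$. Hence $\sum_{i<j}\sinh\delta_i\sinh\delta_j\le\tfrac12$, which already forces $\min_i\delta_i=O(1/n)$: if every clearance were at least some $\delta$, the left side would be $\Omega(n^2\sinh^2\delta)$.

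The main obstacle is that this computation controls the clearance $\delta_i$, whereas the quantity we must bound is the vertex-edge resolution $w$, and the closest approach between two nested geodesic triangles can occur between interior points of two edges rather than at a vertex, so in general $\delta_i$ can be far smaller than the nearest vertex-edge distance. I plan to bridge this gap using the divergence of hyperbolic geodesics: if at some level the clearance $\delta_i$ is much smaller than the vertex-edge distance $d_i$ there, then the two facing edges realizing $\delta_i$ must have length $\Omega\!\left(\log(d_i/\delta_i)\right)$, because a geodesic that passes within $\delta_i$ of another geodesic but whose endpoints lie at distance $\ge d_i$ from it must be long. This gives a dichotomy. For each annulus I will consider the apex triangle whose base is a full edge of $T_i$ or $T_{i+1}$ and whose opposite vertex realizes a genuine vertex-edge distance; by \cref{lem:right-area} its area is $\Theta\big(h\cdot\min(\ell,1)\big)$ for base length $\ell$ and height $h$, so a long base forces a small height. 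The ``narrow-gap'' levels contribute edges of length $\Omega(1)$ and hence, through $\sum_iA_i\le\pi$ and this area estimate, force $w\cdot(\text{number of such levels})=O(1)$; the ``wide-gap'' levels have $\delta_i=\Omega(w)$ and, by the clearance bound above, number only $O(1/w)$. Since the two kinds of levels together number $n/3-1=\Omega(n)$, combining the two counts gives $\Omega(n)=O(1/w)$, that is, $w=O(1/n)$; this strengthens \cref{thm:all-max-planar-have-small-ve-res} for this family. (The degenerate case in which the innermost triangle has diameter at most $1/n$ is immediate, since then $w\le\operatorname{diam}(T_{n/3})\le 1/n$.) I expect the delicate step to be making the narrow/wide-gap dichotomy and the geodesic-divergence length bound fully rigorous while keeping the constants consistent.
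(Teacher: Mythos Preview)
Your approach is genuinely different from the paper's. The paper argues by a diameter dichotomy on the \emph{middle} nested triangle $T$: if $\operatorname{diam}(T)\ge 1$, then each of the $n/12$ outer rings must contain at least two face triangles of diameter $\Omega(1)$, and among $\Omega(n)$ disjoint triangles of total area $\le\pi$ one has area $O(1/n)$, hence height $O(1/n)$ (since diameter is $\Omega(1)$); if $\operatorname{diam}(T)\le 1$, the inner $n/4$ vertices live in a region where hyperbolic and Euclidean distances agree up to constants, and the paper simply invokes the classical Euclidean nested-triangles lower bound of Dolev--Leighton--Trickey. Your route avoids the Euclidean citation and is more intrinsic, which is appealing; the collar/Gauss--Bonnet computation giving $\sum_{i<j}\sinh\delta_i\sinh\delta_j\le\tfrac12$ and hence $\min_i\delta_i=O(1/n)$ is correct and clean.

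There are, however, two problems. The small one: $3$-connectivity fixes the rotation system but not the outer face, so it is not true that all $n/3$ level triangles are nested in every drawing; the paper is careful to say only that at least $n/6$ of them are. This is easily patched.

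The substantive gap is in your bridge from clearance to vertex-edge resolution, and it is not merely a matter of constants. Your geodesic-divergence step asserts that if the minimum distance $\delta_i$ between $T_i$ and $T_{i+1}$ is realized between interior points of edges $e\subset T_i$ and $e'\subset T_{i+1}$, then $e'$ must be long because its endpoints are at distance $\ge w$ from $e$. But the divergence formula $\sinh d(s)=\sinh\delta_i\cosh s$ controls distance to the \emph{line} through $e$, whereas $w$ bounds only distance to the \emph{segment} $e$; an endpoint of $e'$ that sits ``beyond'' an endpoint of $e$ can be far from the segment yet close to the line, defeating the length bound (this can indeed occur inside $T_i$ when the relevant angle of $T_i$ is obtuse). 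Second, even granting a long base $\ell\ge 1$ on some face, your area estimate $\Theta(h\cdot\min(\ell,1))$ is the right-triangle formula; for a face that is obtuse at a base vertex, the altitude from the apex misses the base and the area is governed instead by the altitude onto a \emph{short} side, yielding only $\Omega(w^2)$. Without handling these two issues, the narrow-gap case does not deliver area $\Omega(w)$ per level, and the final count $\Omega(n)=O(1/w)$ is unjustified. The paper's large-diameter case sidesteps both difficulties at once: knowing that a face has diameter $\Omega(1)$ (because it must ``reach around'' a large middle triangle) gives area $\Theta(h)$ directly, with no edge-versus-line or obtuseness subtleties.
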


\begin{proof}
Among the $n/3$ nested triangles of the graph, at least $n/6$ must be drawn as nested regardless of the choice of outer face. In any drawing, consider the middle triangle $T$ of these $n/6$ triangles. We distinguish two cases.
\begin{itemize}
\item If the hyperbolic diameter of triangle $T$ is $\ge 1$, then each of the $n/12$ rings of six faces that surround it in the drawn-nested subset of $n/6$ triangles must, to completely surround $T$, include at least two triangular faces of diameter $\Omega(1)$. Because these $n/6$ high-diameter triangular faces are all disjoint, and all lie within the $\le\pi$ area of the outer face of the drawing (\cref{lem:max-triangle-area}), one of them must have area $\le6\pi/n$. By \cref{lem:right-area}, this triangular face, with diameter $\Omega(1)$ and area $O(1/n)$, must have height $O(1/n)$.
\item If the hyperbolic diameter of triangle $T$ is $\le 1$, then $T$ surrounds a drawing of a nested triangles graph consisting of $n/4$ vertices in $n/12$ nested triangles, all drawn within a region of the hyperbolic plane of diameter $\le 1$. Within this region, hyperbolic distances can be approximated to within a constant factor by Euclidean distances. It is known that Euclidean straight-line drawings of the nested triangles graph must have two vertices whose distance is $O(1/n)$ times the diameter~\cite{DolLeiTri-ACR-84}, and the same follows for the hyperbolic drawing within $T$. If these two close-together vertices are adjacent on a single face of the drawing, then that face must have height $O(1/n)$, and otherwise they are separated by an edge and their distance from that edge is $O(1/n)$.
\end{itemize}
As both cases give a vertex-edge pair at distance $O(1/n)$, the result follows.
\end{proof}

Again, this result is tight. One may draw any planar graph in the hyperbolic plane with vertex-edge resolution $\Omega(1/n)$, in an essentially non-hyperbolic way, by first constructing a straight-line drawing in a Euclidean grid of size $O(n)\times O(n)$~\cite{FraPacPol-IJC-1990,Sch-SODA-1990,Bra-TGGT-08} and then using the Klein model of hyperbolic geometry to map this drawing to a straight-line grid drawing within a subset of the hyperbolic plane of diameter $O(1)$, with constant distortion of distances.

\section{Vertex-vertex resolution and angular resolution}

It is not possible to upper-bound only the vertex-vertex resolution of hyperbolic drawings of planar graphs, as the following observation shows.

\begin{observation}
For every planar graph $G$, and every distance $d$, there is a planar hyperbolic drawing of $G$ with vertex-vertex resolution $\ge d$.
\end{observation}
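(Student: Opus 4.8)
The plan is to build the required drawing incrementally, exploiting the fact that, unlike the Euclidean plane, the hyperbolic plane offers unbounded ``room'' outside any bounded drawing into which new vertices can be placed arbitrarily far from those already present. First I would reduce to the case of a maximal planar graph: any planar graph on $n\ge 3$ vertices is a spanning subgraph of a triangulation on the same vertex set (add edges within a fixed planar embedding to make the graph connected, then triangulate each face). Since vertex--vertex resolution depends only on the placement of the vertices, a drawing of the triangulation with all vertices at pairwise distance $\ge d$ yields, after deleting the added edges, a drawing of $G$ with the same property; the cases $n<3$ are trivial.

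For a triangulation I would use the canonical ordering of de Fraysseix, Pach, and Pollack~\cite{FraPacPol-IJC-1990}: an ordering $v_1,\dots,v_n$ of the vertices such that, for each $k\ge 3$, the subgraph $G_k$ induced on $\{v_1,\dots,v_k\}$ is internally triangulated with a simple outer cycle $C_k$, and $v_{k+1}$ lies in the outer face of $G_k$ with its neighbors in $G_k$ forming a contiguous arc of $C_k$. I would place the vertices in this order, maintaining the invariant that the current drawing is a straight-line planar hyperbolic drawing whose outer face is an unbounded region and all of whose vertices are pairwise at distance $\ge d$. To add $v_{k+1}$, observe that the set of points in the outer face from which every vertex of its neighbor-arc can be joined by a straight segment without crossing the existing drawing forms an unbounded wedge extending to infinity away from that arc. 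Choosing a point of this wedge sufficiently far out places $v_{k+1}$ at distance $\ge d$ from each of the finitely many previously placed vertices, while the straight edges from $v_{k+1}$ to its contiguous arc fan out without crossings, re-establishing the invariant. By induction every pair of vertices is at distance $\ge d$, since each pair attains this bound at the step that introduces its later endpoint.

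The main obstacle is the geometric bookkeeping in the insertion step: verifying that the visibility wedge for each contiguous arc is genuinely unbounded, and that placing $v_{k+1}$ far out neither creates crossings with existing edges nor degrades the outer boundary in a way that blocks later insertions. Here the negative curvature only helps --- the outer face of any compact drawing is unbounded, and because hyperbolic area grows exponentially with radius, a far-enough placement leaves all earlier vertices within a bounded region and hence at distance $\ge d$ --- but the non-crossing and maintainability claims still require a careful, though routine, argument about visibility of a boundary arc from points of the unbounded exterior. An alternative realization of the same idea, which may make the subsequent analysis of angular resolution more transparent, is to place the vertices on concentric hyperbolic circles of exponentially growing radius, one canonical-ordering ``layer'' per circle, so that consecutive circles are at distance $\ge d$ and the growth of circumference provides the room to separate the vertices within each layer.
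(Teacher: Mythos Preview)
Your approach is essentially the paper's: reduce to a maximal planar graph, use the de~Fraysseix--Pach--Pollack canonical ordering, and insert each vertex far enough into the outer face to meet the separation bound. The paper dispatches the bookkeeping obstacle you flag by working in the upper halfplane model and placing each new vertex at the $x$-midpoint of the interval spanned by its earlier neighbors, at a $y$-coordinate high enough for both visibility and distance $\ge d$; this keeps the outer boundary $x$-monotone, so the visibility and non-crossing claims are immediate rather than requiring a separate wedge argument.
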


\begin{proof}
We may assume without loss of generality that $G$ is maximal planar, and use a de Fraysseix--Pach--Pollack~\cite{FraPacPol-IJC-1990} style graph drawing algorithm (without horizontal shifting)  in which vertices are placed into the drawing one-by-one in a canonical ordering, starting from two adjacent vertices on the outer face of the eventual drawing, so that when each vertex is added to the drawing it is adjacent to a consecutive subsequence of vertices on the outer face of the current drawing.

In the upper halfplane model of the hyperbolic plane, place the first two vertices on two arbitrary points with distinct $x$-coordinates. Place each subsequent vertex above the midpoint of the $x$-interval spanned by its earlier neighbors, so that at each stage the upper boundary of the drawing is an $x$-monotone piecewise-linear curve. Choose the vertical position of each vertex, both high enough that it is visible (along a hyperbolic line segment) to all of its previously-placed neighbors, and high enough that it is at distance at least $d$ from all previously placed vertices; these two requirements do not interfere with each other.
\end{proof}

Nevertheless, when the vertex-vertex resolution is large, it may force the drawing to be bad in other ways, as the remainder of this section shows.

\begin{figure}[t]
\centering\includegraphics[width=0.6\textwidth]{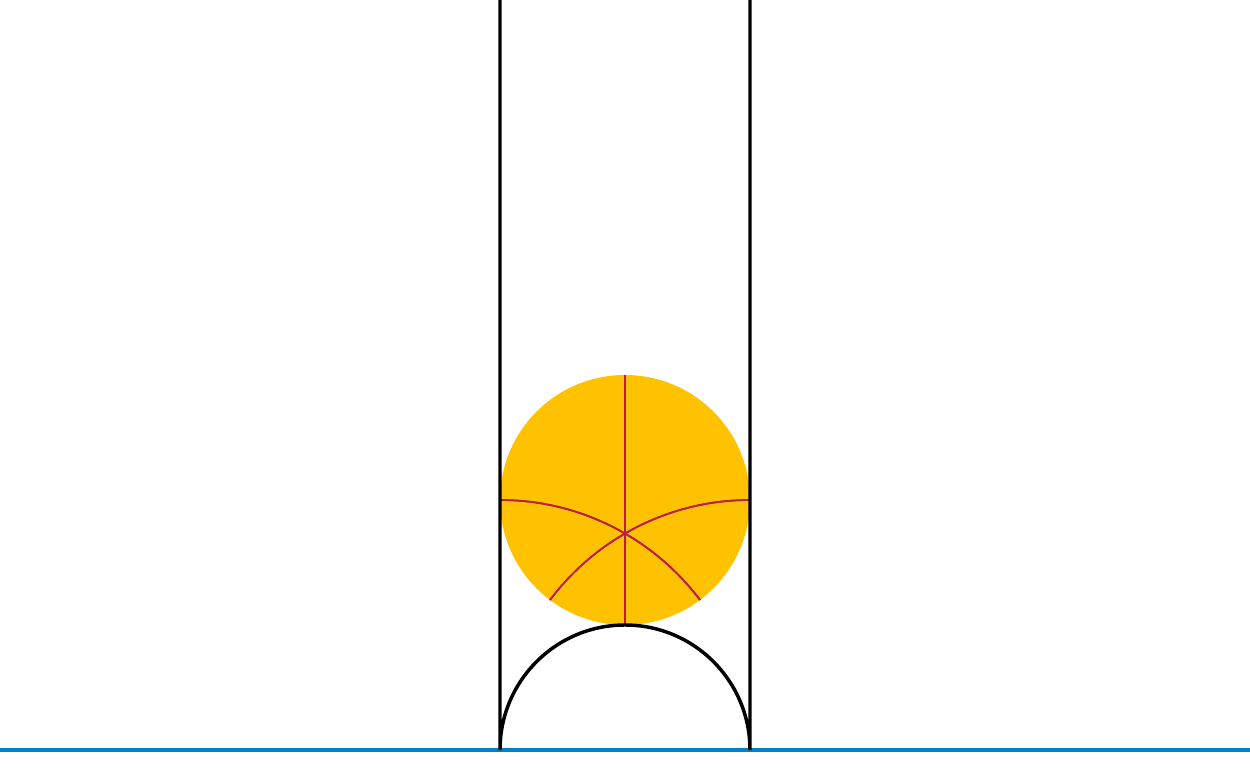}
\caption{An ideal hyperbolic triangle (black) in the upper halfplane model of the hyperbolic plane (above the blue line), with an inscribed circle of radius $\ln\sqrt{3}$ (yellow). The thin red hyperbolic line segments meet at the center of the circle (the incenter of the triangle), which is somewhat below its Euclidean center.}
\label{fig:ideal-incircle}
\end{figure}

\begin{lemma}
\label{lem:inradius}
Every hyperbolic triangle has an inscribed circle touching all three of its sides. Its radius (the inradius of the triangle) is at most $\ln\sqrt{3}\approx 0.5493$.
\end{lemma}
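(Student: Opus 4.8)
The plan is to treat existence and the bound separately. For existence, I would use the standard fact (valid already in neutral geometry) that the locus of points equidistant from two intersecting geodesics is their angle bisector: the internal bisector at each vertex consists of the interior points equidistant from the two sides meeting there. Intersecting the bisectors at two vertices produces a point $I$ equidistant from all three sides, and the bisector at the third vertex then also passes through $I$, so $I$ is the incenter. Letting $r$ be the common distance from $I$ to the sides, the circle of radius $r$ centered at $I$ meets each side exactly at the foot of the perpendicular from $I$, hence is tangent to all three sides; this is the inscribed circle.

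For the upper bound on $r$, I would first observe that the inradius is monotone under inclusion: if a triangle $T$ lies inside a region $R$, then the incircle of $T$ is a circle contained in $R$, so the largest circle inside $R$ has radius at least $r(T)$. I would then show that every hyperbolic triangle is contained in an \emph{ideal} triangle (all three vertices on the circle at infinity). Combined with the fact that all ideal triangles are congruent --- the isometry group $\mathrm{PSL}(2,\mathbb{R})$ of the upper halfplane acts transitively on triples of distinct boundary points --- this reduces the whole problem to computing the inradius of a single ideal triangle.

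To prove the containment, I would work in the upper halfplane model and apply an isometry placing side $BC$ on the positive imaginary axis (the geodesic with ideal endpoints $0$ and $\infty$), with the third vertex $A$ having positive real part $a_1>0$ and height $a_2$. The ideal triangle with ideal vertices $0$, $\infty$, and a real point $p$ has as its sides the imaginary axis, the vertical line $x=p$, and the semicircle from $0$ to $p$; a short calculation shows that $A$ lies inside it for any $p$ with $a_1<p<a_1+a_2^2/a_1$, while $B$ and $C$ lie on its boundary along the imaginary axis. Since ideal triangles are geodesically convex and contain all three vertices of $T$, they contain $T$.

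Finally I would compute the inradius of the symmetric ideal triangle with vertices $-1$, $1$, $\infty$ (the configuration of \cref{fig:ideal-incircle}). By reflection symmetry across the imaginary axis the incenter is a point $(0,c)$; using that the hyperbolic distance from $(0,c)$ to the unit semicircle is $\ln c$ and to the line $x=1$ is $\sinh^{-1}(1/c)$, equating the two distances yields $c=\sqrt3$ and inradius $\ln\sqrt3$. The main obstacle is the containment step, which is exactly where tightness lives: any softer enclosing argument is too weak. For instance, bounding $r$ by combining the area bound of \cref{lem:max-triangle-area} with the area $2\pi(\cosh r-1)$ of a hyperbolic disk gives only $\cosh r\le\tfrac32$, i.e.\ $r\le\cosh^{-1}\tfrac32\approx0.96$; obtaining the exact constant $\ln\sqrt3$ requires producing the explicit enclosing ideal triangle.
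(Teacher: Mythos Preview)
Your argument is correct and complete in all essential respects. The paper itself does not give a proof: it simply cites a textbook (\cite[Theorem 13.4]{Pet-20}) and remarks that the extremal case is the ideal triangle. So your proposal is not merely a different route to the same conclusion; it is a genuine proof where the paper offers only a reference.

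What you do that the paper does not: you reduce to the ideal case by an explicit containment (placing one side on the imaginary axis and choosing the third ideal vertex $p$ in the nonempty interval $(a_1,\,a_1+a_2^2/a_1)$), invoke transitivity of $\mathrm{PSL}(2,\mathbb{R})$ on boundary triples to normalize, and then solve $\ln c=\sinh^{-1}(1/c)$ to get $c=\sqrt3$. Your closing remark that the area bound $\cosh r\le\tfrac32$ is too weak is exactly the right diagnosis of why the containment step is needed for the sharp constant.

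One step you use implicitly deserves a sentence of justification: when you pass from ``the incircle of $T$ is a disk inside the ideal triangle $\hat T$'' to ``$r(T)\le r(\hat T)$'', you are using that the incircle of $\hat T$ is the \emph{largest} disk contained in $\hat T$, not merely a disk tangent to all three sides. This is true and easy: any inscribed disk not tangent to all three sides can be enlarged by translating its center along the bisector toward the untouched side, and in the ideal triangle a disk tangent to two asymptotic sides with center on their bisector $x=0$ at height $y>\sqrt3$ already has radius $\sinh^{-1}(1/y)<\ln\sqrt3$. Adding that one line would make the writeup fully self-contained.
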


\begin{proof}
See for instance \cite[Theorem 13.4, p. 103]{Pet-20}. The limiting case of inradius${}=\ln\sqrt{3}$ occurs for \emph{ideal triangles} (\cref{fig:ideal-incircle}), with all vertices at infinity.
\end{proof}

\begin{lemma}
\label{lem:far-angle}
Let $v$ be a vertex of a hyperbolic triangle $T$, at distance $d$ from the incenter of $T$. Then the angle of $T$ at $v$ is upper bounded by an exponentially small function of $d$.
\end{lemma}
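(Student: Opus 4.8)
The plan is to exploit the inscribed circle guaranteed by \cref{lem:inradius}: the two sides of $T$ meeting at $v$ are both tangent to this circle, so the angle at $v$ is exactly the angle under which the incircle is seen from $v$. Since the inradius is bounded by the absolute constant $\ln\sqrt3$, this viewing angle must shrink as $v$ recedes from the incenter. First I would invoke the standard fact that the incenter $I$, being equidistant from all three sides, lies on the internal angle bisector at every vertex. Hence the geodesic segment $vI$ bisects the angle of $T$ at $v$; writing $\alpha$ for the resulting half-angle, the angle of $T$ at $v$ equals $2\alpha$.

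Next I would drop the perpendicular from $I$ to one of the two sides incident to $v$, meeting it at the point of tangency $P$. This produces a hyperbolic right triangle $vPI$ with the right angle at $P$, hypotenuse $|vI|=d$, and one leg $IP$ equal to the inradius $r\le\ln\sqrt3$; its angle at $v$ is precisely $\alpha$. Applying hyperbolic right-triangle trigonometry to $vPI$ then yields the half-angle. The cleanest route is the law of sines for right triangles, which gives $\sin\alpha=\sinh r/\sinh d$; equivalently one may use the tangent rule of \cref{lem:tangent-rule}, $\tan\alpha=\tanh r/\sinh|vP|$, together with the hyperbolic Pythagorean relation $\cosh d=\cosh r\cosh|vP|$ to re-express everything in terms of $d$.

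From here the bound is immediate. Because $r\le\ln\sqrt3$, we have $\sinh r\le\sinh(\ln\sqrt3)=1/\sqrt3$, a constant, while $\sinh d$ grows like $\tfrac12 e^{d}$. Thus $\sin\alpha=\sinh r/\sinh d=O(e^{-d})$, and since $\alpha$ is itself small once $d$ is large we also get $\alpha=O(e^{-d})$. Consequently the angle of $T$ at $v$ is $2\alpha=O(e^{-d})$, an exponentially small function of $d$, as claimed.

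The trigonometric estimates are routine; the step demanding the most care is the geometric reduction. I need to confirm that $v$ lies strictly outside the incircle, so that the two incident sides are genuine tangent lines and the bisector/right-triangle picture is valid. This holds because the incircle lies in the interior of $T$ and is tangent to (not crossing) the two sides at $v$, which forces $d>r$ and hence $\sinh r/\sinh d<1$. Once this setup is justified, the boundedness of the inradius supplied by \cref{lem:inradius} does all the remaining work, converting the linear growth of $d$ into exponential decay of the angle.
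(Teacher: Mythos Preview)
Your argument is correct and is in fact cleaner than the paper's own proof. The paper argues model-theoretically: it passes to the upper halfplane, encloses the region between the incircle and $v$ inside an ideal triangle with two vertical sides, and then uses the explicit metric $(ds)^2=((dx)^2+(dy)^2)/y^2$ to show that at height corresponding to hyperbolic distance $d$ the two vertical sides are $O(e^{-d})$ apart in hyperbolic distance, forcing the angle at $v$ to be exponentially small. You instead reduce directly to a single hyperbolic right triangle $vPI$ and invoke the identity $\sin\alpha=\sinh r/\sinh d$, which together with the uniform bound $r\le\ln\sqrt3$ from \cref{lem:inradius} yields the explicit estimate $\sin\alpha\le(1/\sqrt3)/\sinh d$ and hence $2\alpha=O(e^{-d})$. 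Your route is shorter, gives a sharper constant, and avoids the somewhat informal ``exponentially close'' language of the paper; the paper's approach, on the other hand, is more self-contained in that it does not quote the right-triangle sine rule (only the tangent rule of \cref{lem:tangent-rule} is stated earlier), and its ideal-triangle comparison foreshadows the similar sandwiching arguments used later in \cref{thm:grid}.
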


\begin{proof}
Assume without loss of generality that $d>2$ because otherwise the upper bound of the lemma is $O(1)$, trivially valid for all angles.

Let $V$ be the subset of $T$ between the inscribed circle $C$ and vertex $v$. Replace the sides of $T$ by two asymptotic lines, tangent to the inscribed circle of $T$ at points wider than the tangent points of $T$ to the same circle, and expand $C$ if necessary until the radius of the expanded circle $\hat C$ is exactly $\ln\sqrt{3}$, enclosing $V$ in the corresponding subset $\hat V$ of an ideal triangle $\hat T$ having the same incenter, with $v$ equally far from the two sides of $\hat T$. By choosing the point at infinity where the two sides of $\hat T$ meet to be the point at vertical infinity of an upper halfplane model of the hyperbolic plane, $\hat V$ can be made to be the region two vertical lines and above a circle $\hat C$ seen in the upper center of \cref{fig:ideal-incircle}.

In the upper halfplane model, the infinitesimal unit of length $ds$ is given by
\[ (ds)^2=\frac{(dx)^2+(dy)^2}{y^2}, \]
where $x$ and $y$ are Cartesian coordinates. This distance is locally Euclidean, with a scale factor inversely proportional to height, so two points on the vertical lines of \cref{fig:ideal-incircle} at Euclidean distance $y$ above the blue line are at hyperbolic distance $O(1/y)$ from each other. Integrating this scaled distance over a vertical line segment, the distance from $(x,y_1)$ to $(x,y_2)$ (with $y_1<y_2$) simplifies to $\ln(y_2/y_1)$. Therefore, for point $v$ to be at hyperbolic distance $d$ above the center of $\hat C$ in this model, its $y$-coordinate is exponentially larger than the $y$-coordinate of the center of $C$. At that height, the scale factor of hyperbolic distance is so small that the two vertical sides of $\hat T$ are exponentially close. The points along the two sides of $T$ at unit distance from $v$ are at a height corresponding to hyperbolic distance $\ge d-1$ above the center of $\hat C$, and are also exponentially close in hyperbolic distance, because they are sandwiched between the two sides of $\hat T$ that are exponentially close at that height.

The angle of $T$ at $v$ equals the angle formed at $v$ by these two exponentially-close points at unit distance from $v$, so it is exponentially small.
\end{proof}

\begin{theorem}
\label{thm:max-planar-vv-angle}
For every constant $c$, and every $n$-vertex maximal planar graph $G$, every hyperbolic planar straight line drawing of $G$ that has vertex-vertex resolution $\ge c$ also has angular resolution that is exponentially small in $n$, with the base of the exponential depending on $c$.
\end{theorem}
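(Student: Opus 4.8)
The plan is to combine the bounded-area constraint on the whole drawing with the vertex–vertex resolution lower bound to force a single long boundary edge, and then to apply the far-angle lemma (\cref{lem:far-angle}) to the face bordering that edge. First I would observe that, since $G$ is maximal planar, its outer boundary is a triangle whose three sides are edges of $G$, and whose interior has area $\le\pi$ by \cref{lem:max-triangle-area} and contains all $n$ vertices. Because the drawing has vertex–vertex resolution $\ge c$, the open hyperbolic disks of radius $c/2$ centered at the vertices are pairwise disjoint, and each has area $2\pi(\cosh(c/2)-1)=\Omega(1)$, a constant depending only on $c$. All these disks lie inside the $(c/2)$-neighborhood of the bounding triangle, whose area — by the hyperbolic Steiner formula for a convex region of perimeter $P$ — is at most $\pi\cosh(c/2)+P\sinh(c/2)+2\pi(\cosh(c/2)-1)$. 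Comparing the total disk area to this bound yields $P=\Omega(n)$, so the longest of the three boundary edges, call it $uw$, has length $L=\Omega(n)$, with the constant proportional to $c$.

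Next I would examine the unique interior face $T=uwx$ bordering the edge $uw$. By \cref{lem:inradius}, $T$ has an incircle of radius $r\le\ln\sqrt3$ whose center $I$ is at distance $r$ from the line through $uw$; let $p$ be the point where the incircle touches the side $uw$. Since $p$ lies on the segment $uw$, we have $d(u,p)+d(p,w)=L$, so one endpoint, say $u$, satisfies $d(u,p)\ge L/2$. The radius $Ip$ is perpendicular to $uw$, so the hyperbolic Pythagorean identity $\cosh d(u,I)=\cosh d(u,p)\cosh r\ge\cosh d(u,p)$ gives $d(u,I)\ge d(u,p)\ge L/2=\Omega(n)$.

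I would then apply \cref{lem:far-angle} to the vertex $u$ of $T$, which sits at distance $\Omega(n)$ from the incenter of $T$: the angle of $T$ at $u$ — that is, the angle between the two consecutive edges $uw$ and $ux$ — is exponentially small in $n$, with the base of the exponential controlled by the constant hidden in $L=\Omega(n)$ and hence depending on $c$. This angle is an upper bound on the angular resolution of the whole drawing, completing the argument.

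The main obstacle I anticipate is the packing step: making rigorous that a bounded-area triangle holding $n$ well-separated vertices must have perimeter $\Omega(n)$. This requires verifying the Steiner-type area estimate for the $(c/2)$-neighborhood of the convex bounding triangle and carefully tracking how every constant depends on $c$, so that the final base of the exponential comes out as a function of $c$ as claimed. By contrast, the incenter and far-angle steps are short applications of results already in hand.
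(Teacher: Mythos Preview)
Your argument is correct but follows a different path from the paper's. The paper works entirely inside the outer triangle $T$: it partitions $T$ into its inscribed disk and the three ``horn'' regions toward the corners, notes that the bounded-radius disk can hold only $O(1)$ points at mutual distance $\ge c$, so some horn holds $\Omega(n)$ vertices, and then slices that horn into pieces of bounded diameter according to integer distance from the incenter to force some vertex---and hence the corner $v$ beyond it---to lie at distance $\Omega(n)$ from the incenter of $T$; \cref{lem:far-angle} is then applied to $T$ itself at $v$. You instead push the packing out to the $(c/2)$-neighborhood of $T$ and invoke the hyperbolic Steiner formula to obtain the clean intermediate statement that the perimeter of $T$ is $\Omega(n)$, pick a boundary edge $uw$ of length $\Omega(n)$, and apply \cref{lem:far-angle} to the \emph{interior} face $uwx$ across that edge (using the touching point of its incircle on $uw$ and the hyperbolic Pythagorean identity to place $u$ far from the incenter). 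The paper's version is fully self-contained and sidesteps exactly the Steiner-type estimate you flag as your main obstacle; your version packages the packing step as a reusable ``linear perimeter'' lemma and makes the dependence on $c$ flow through a single explicit inequality.
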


\begin{proof}
Let $T$ be the outer face of a drawing of $G$, and partition $T$ into four regions: its inscribed circle $C$, and the three regions between this circle and the three vertices of $T$. The inscribed circle has bounded diameter by \cref{lem:inradius} and can contain only $O(1)$ points of minimum separation $c$, so there must be a vertex $v$ of the outer face and a region $V$ between $C$ and $v$ that contains $\Omega(n)$ vertices of the drawing. Partition $V$ into regions of bounded diameter by classifying the points of $V$ according to their distance from the center of $C$, rounded to an integer; each of these regions can contain only $O(1)$ points of minimum separation $c$, so there must be a vertex $w$ within a region at distance $\Omega(d)$ from the center of $C$. Vertex $v$ itself must be even farther away, in order for $T$ to enclose $w$. Therefore, by \cref{lem:far-angle}, the angle of $T$ at $v$ is exponentially small.
\end{proof}

\section{Special classes of planar graphs}
\subsection{Series-parallel graphs}

The proofs of \cref{thm:all-max-planar-have-small-ve-res} and of \cref{thm:max-planar-vv-angle} depend only on the property of maximal planar graphs that they have a triangle (the outer triangle of their planar embedding) that contains a linear number of other vertices. Therefore, it can be adapted to other graphs with analogous properties.

\begin{figure}[t]
\centering\includegraphics[width=0.8\textwidth]{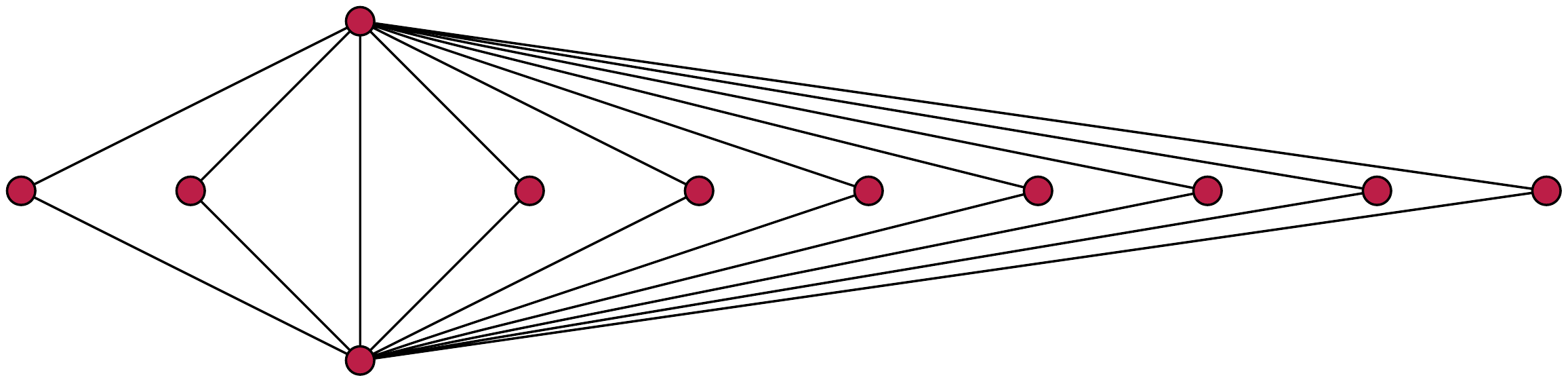}
\caption{$K_{1,1,9}$}
\label{fig:k119}
\end{figure}

\begin{theorem}
\label{thm:serpar}
For every $n$, there exists an $n$-vertex series-parallel graph $G$ of bounded bandwidth with the following properties:
\begin{itemize}
\item Every hyperbolic planar straight line drawing of $G$ has vertex-edge resolution $O(1/\sqrt{n})$.
\item For every $c$, every hyperbolic planar straight line drawing of $G$ that has vertex-vertex resolution~$\ge c$ also has angular resolution that is exponentially small in $n$, with the base of the exponential depending on $c$.
\end{itemize}
\end{theorem}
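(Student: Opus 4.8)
The plan is to isolate, exactly as in the maximal-planar case, a single structural feature that drives both bounds, and then to rerun the arguments of \cref{thm:all-max-planar-have-small-ve-res} and \cref{thm:max-planar-vv-angle} essentially verbatim. Concretely, I would design $G$ so that \emph{every} planar straight-line drawing contains a facial cycle $C$ of constant length that plays the role of the outer triangle: being a constant-length geodesic polygon it bounds a region of area $O(1)$ (the angle-defect bound underlying \cref{lem:max-triangle-area} gives area at most $(|C|-2)\pi$), and this region contains $\Omega(n)$ of the vertices of $G$. The gadget of \cref{fig:k119} is the intended building block: a copy of $K_{1,1,c}$ contributes a constant-length cycle through its two apices together with a batch of degree-two vertices nested inside it, and I would compose $\Theta(n)$ such gadgets by series and parallel operations so that the composite is series-parallel, has bandwidth $O(c)=O(1)$, and forces linearly many vertices into the interior of one constant-size facial cycle.

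Granting that feature, both conclusions follow as before. For the vertex-edge bound, the $\Omega(n)$ enclosed vertices cut the area-$O(1)$ interior of $C$ into $\Omega(n)$ faces, so some face has area $O(1/n)$; since $G$ has bounded bandwidth this face has $O(1)$ sides, and a bounded-complexity hyperbolic polygon of area $O(1/n)$ has a vertex within $O(1/\sqrt n)$ of a non-incident side (apply \cref{lem:height-to-area} to a triangulation of the face), giving the first bullet. For the angular bound under vertex-vertex resolution $\ge c$, I would repeat the proof of \cref{thm:max-planar-vv-angle}: classify the interior vertices by their integer-rounded distance from a fixed interior point of $C$; each bounded-diameter class holds only $O(1)$ points of separation $\ge c$, so some vertex lies at distance $\Omega(n)$ from that point, whence a vertex of $C$ must lie even farther out in order to enclose it, and the extension of \cref{lem:inradius} and \cref{lem:far-angle} from triangles to constant-size polygons makes the interior angle of $C$ there exponentially small in $n$, with base governed by $c$.

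The main obstacle is producing this structural feature while keeping $G$ simultaneously series-parallel \emph{and} of bounded bandwidth. Bounded bandwidth forces bounded degree, which rules out the single high-degree apices that make the enclosure automatic for $K_{1,1,n}$ and for maximal planar graphs; worse, because series-parallel graphs contain no rigid ($3$-connected) substructure, no fixed cycle can be forced to enclose a prescribed vertex by topology alone, since any piece attached to a cycle at only two vertices can be flipped to its outer side. I therefore expect the construction to need the two-case strategy that resolves the analogous difficulty in \cref{thm:nested-ve-res}: a drawing that does not already concentrate $\Omega(n)$ vertices inside a constant-size cycle should instead be forced, by the parallel-bundle structure of the nested gadgets, into an approximately compact regime of diameter $O(1)$, where hyperbolic distances are within a constant factor of Euclidean ones and a Euclidean-style resolution lower bound takes over. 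Arranging the gadgets so that these two regimes exhaust all drawings and meet at the claimed $O(1/\sqrt n)$ and exponential thresholds is the delicate point of the argument.
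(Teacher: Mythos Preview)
The paper's proof is far simpler than what you outline: it takes $G=K_{1,1,n-2}$ itself, not a composite of bounded-degree gadgets. In every planar straight-line drawing of $K_{1,1,n-2}$ the $n-2$ degree-two vertices appear as nested triangles on the two sides of the edge joining the two apices, so the outermost triangle $T$ on the majority side encloses $\Omega(n)$ vertices. From there the paper reruns the arguments of \cref{thm:all-max-planar-have-small-ve-res} and \cref{thm:max-planar-vv-angle} on~$T$, with one extra wrinkle for the first bullet: the interior faces are quadrilaterals rather than triangles, so the small-area face is split by a diagonal into two triangles of area $O(1/n)$ and height $O(1/\sqrt n)$, and a short case analysis on whether the diagonal realises those heights yields the vertex--edge bound. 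No two-case ``large diameter versus approximately Euclidean'' split is used.

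Your central obstacle --- that bounded bandwidth forces bounded degree and therefore rules out global apices --- is a correct observation, and it exposes a real tension with the theorem as stated: $K_{1,1,n-2}$ has two vertices of degree $n-1$, so its bandwidth is $\Theta(n)$, not $O(1)$. The paper's proof simply does not engage with this; it establishes the two bullet points for $K_{1,1,n-2}$ but does not justify the ``bounded bandwidth'' clause. Your more elaborate bounded-degree construction is therefore aiming at a strictly stronger statement than what the paper's own argument delivers, and the difficulties you anticipate (no rigid $3$-connected pieces, flippable parallel bundles) are genuine for that stronger target --- but they are not obstacles the paper overcomes, because the paper never confronts them.
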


\begin{proof}
Let $G=K_{1,1,n-2}$ (\cref{fig:k119}). Up to permutation of the vertices all its planar straight-line drawings consist of $a$ nested triangles on one side of the edge between the two singleton sets of the tripartition, for $0\le a\le n-2$, and of $n-2-a$ nested triangles on the other side. Therefore, if $T$ is the outermost triangle on the side with the larger number of triangles, $T$ contains $\Omega(n)$ other vertices of $G$.

The bound on vertex-edge resolution follows the same lines as the proof of \cref{thm:all-max-planar-have-small-ve-res}: $T$ is partitioned into $\Omega(n)$ faces (one triangle and the rest quadrilaterals), so one of these faces has area $O(1/n)$.
If this low-area face is a triangle, the bounds on triangle height used in the proof of \cref{thm:all-max-planar-have-small-ve-res} show that it already has vertex-edge resolution $O(1/\sqrt{n})$. If it is a quadrilateral, add a diagonal, subdividing it into two triangles of area $O(1/n)$ and height $O(1/\sqrt{n})$. If the heights of both triangles are defined by the distances of a vertex to the added diagonal, then each of these vertices also has distance $O(1/\sqrt{n})$ to a non-adjacent side of the quadrilateral. If one or both of the triangles has a height that does not involve the added diagonal, then we get vertex-edge resolution $O(1/\sqrt{n})$ directly.

The result on vertex-vertex resolution and angular resolution follows directly by applying the argument from the proof of \cref{thm:max-planar-vv-angle} to~$T$.
\end{proof}

\subsection{Grid graphs}

The $n\times n$ grid graphs, in their standard Euclidean drawing, are particularly well-behaved: all edges have the same length, and the vertex-vertex, vertex-edge, and angular resolutions are all constant. Unlike the maximal planar graphs, they do not have cycles of bounded length containing many vertices. This does not prevent problems when drawing them hyperbolically:

\begin{theorem}
\label{thm:grid}
Every hyperbolic planar straight line drawing of an $n\times n$ grid has vertex-edge resolution $O(1/\sqrt{n})$. All such drawings that have vertex-vertex resolution $\Omega(1)$ have angular resolution exponentially small in~$n$.
\end{theorem}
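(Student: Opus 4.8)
The plan is to recycle the area accounting behind \cref{thm:all-max-planar-have-small-ve-res} for the vertex-edge bound and to adapt the far-angle machinery of \cref{thm:max-planar-vv-angle} for the angular bound; the novelty is that, as the surrounding text stresses, the grid has no short cycle enclosing many vertices, so the single outer triangle must be replaced by a global argument.

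For the vertex-edge bound I would first show that in any planar straight-line drawing the region enclosed by the outer boundary has area $O(n)$. The grid is $2$-connected and bipartite, so every face is a simple cycle of length at least $4$; Euler's formula gives exactly $(n-1)^2$ bounded faces, and since $\sum_f\deg(f)=2|E|=4n(n-1)$ the outer face has length at most $4(n-1)$. Generalizing \cref{lem:max-triangle-area}, a hyperbolic $m$-gon has area at most $(m-2)\pi$, so the enclosed region has area at most $(4n-6)\pi=O(n)$. Since the $(n-1)^2$ bounded faces partition this region, one of them has area $O(1/n)$; triangulating that face by a diagonal and invoking \cref{lem:right-area} exactly as in the proof of \cref{thm:serpar} (to dispatch the quadrilateral/diagonal case) produces a vertex and a non-incident edge at distance $O(1/\sqrt n)$.

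For the angular bound, assume vertex-vertex resolution $\ge c$. The $n^2$ pairwise-$c$-separated vertices carry $n^2$ disjoint disks of radius $c/2$, all lying in the $(c/2)$-neighborhood of the enclosed region; comparing their total area $\Omega(n^2)$ with the neighborhood area, which is $O(\mathrm{area})+O(\mathrm{perimeter})$, forces the outer boundary to have length $\Omega(n^2)$. With only $O(n)$ boundary edges this yields edges of length $\Omega(n)$, so the drawing has depth $D=\Omega(n)$: fixing a central point $o$, some vertex lies at distance $D$ from $o$, and since balls are geodesically convex every edge stays inside $B(o,D)$. The key estimate is that a chord of $B(o,D)$ of length $\Omega(D)$ emanating from a vertex at distance $D$ must point within $e^{-\Omega(D)}$ of the radial direction toward $o$ (from the staying-in-ball condition $\tanh(s/2)\le\tanh D\cos\phi$, forcing $\cos\phi\ge 1-e^{-\Omega(D)}$). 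Hence two long edges meeting at such a deep vertex subtend an angle $e^{-\Omega(D)}=e^{-\Omega(n)}$, the same exponential effect recorded in \cref{lem:far-angle}.

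The step I expect to be the main obstacle is producing a deep vertex incident to \emph{two} long edges, that is, a sharp fold of the long boundary rather than one long edge flanked by short ones. My plan is a descent: take a longest edge $ab$ and the bounded face $Q$ it bounds, and look at the edges of $Q$ at $a$ and $b$. Either one is again long---yielding two long edges at a common vertex and finishing---or both are short, in which case the side of $Q$ opposite $ab$ is itself long and $Q$ is a thin strip, and one repeats on the next face. The difficulty is proving this descent must terminate: in the ``sheared strip'' configurations where it does not, convexity of the distance between geodesics and a quantitative packing/area accounting should show that the $n^2$ separated vertices would have to be crammed into an essentially one-dimensional tube of capacity only $O(n)$, a contradiction. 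Making this accounting airtight---ruling out all persistent thin-strip layouts---is where the real work lies.
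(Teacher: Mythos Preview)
Your vertex-edge argument is correct and essentially matches the paper's: bound the area enclosed by the outer polygon by $O(n)$, pigeonhole onto the $(n-1)^2$ bounded faces to find one of area $O(1/n)$, and convert small quadrilateral area to small height via a diagonal exactly as in \cref{thm:serpar}. Your Euler-formula bound on the outer-face length is a clean way to handle non-standard choices of outer face.

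The angular-resolution argument, however, has a genuine gap beyond the one you flag. Your perimeter bound $\Omega(n^2)$ is correct, and it does yield a boundary edge of length $\Omega(n)$ and hence $D=\Omega(n)$. But your small-angle estimate requires a vertex that is simultaneously (i) at distance essentially $D$ from the chosen centre $o$ and (ii) incident to two edges each of length $\Omega(D)$. Your descent, even if it terminates, only produces two long edges at some common vertex $v$; it gives no control over $d(o,v)$, and if $v$ sits well inside $B(o,D)$ the staying-in-ball inequality $\cos\phi\ge\coth D\,\tanh(\ell/2)$ no longer forces $\phi$ to be small. Moreover the descent itself---replacing a long side of a thin quadrilateral by its opposite side---can in principle run through $\Theta(n^2)$ faces arranged as thin parallel strips without ever producing two long edges at a single vertex; your proposed packing argument would have to rule this out globally, and it is not clear that it does.

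The paper sidesteps both difficulties by locating the small angle at an \emph{interior} grid vertex rather than on the boundary. It triangulates the outer polygon into $O(n)$ triangles and, from each interior vertex $v$, grows a polygonal curve toward the boundary by repeatedly crossing successive triangles along shortest transversals (the dual tree guarantees termination). A counting argument shows that each triangle can ``capture'' only $O(\gamma n)$ well-separated vertices lying within distance $\gamma n$ of its incentre, so for small enough $\gamma$ some interior $v$ is uncaptured: every triangle its curve crosses is crossed far from the incentre, and by \cref{lem:far-angle} each crossing segment is exponentially short. Hence $v$ is squeezed between two boundary edges at hyperbolic distance $e^{-\Omega(n)}$, and at least unit distance from their endpoints. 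Since $v$ has degree four, two of its grid edges must exit along this exponentially thin corridor, and those two edges form the required exponentially small angle. This line of argument never needs to find long edges or identify a deepest vertex.
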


\begin{proof}
If the grid is drawn with the standard outer face, a polygon with $4(n-1)$ sides, this polygon can be triangulated into $O(n)$ triangles. By \cref{lem:max-triangle-area} the bounded faces of the drawing cover an area of $O(n)$, and one of the $(n-1)^2$ grid quadrilaterals has area $O(1/n)$. For a different outer face, the area is even smaller. The argument from a quadrilateral of small area to small vertex-edge resolution is the same as for \cref{thm:serpar}.

Now suppose we have a drawing with vertex-vertex resolution $\Omega(1)$, and triangulate its boundary polygon. Let $v$ be any non-boundary vertex of the grid. Define two polygonal curves to the boundary from $v$, that step to the nearest points on two different sides of the triangle containing $v$ in the boundary triangulation, then repeatedly cross successive triangles on shortest crossing segments until reaching the boundary. The dual graph of the boundary triangulation (like any triangulation of any simple polygon) is a tree, and these curves follow a path in this dual tree, so they must terminate at the boundary, crossing each triangle at most once. We say that $v$ is captured by triangle $T$ if one of these curves crosses $T$, $v$ is within distance $\le \gamma n$ of the incenter of $T$ (for a constant of proportionality $\gamma$ to be determined later) or within unit distance of a vertex of $T$, and $v$ is not captured by any triangle crossed earlier by the curve.

If vertex $v$ is captured by triangle $T$, then before the curve for $v$ reaches $T$ it can only cross segments of other triangles that are exponentially short (as a function of $n$), by the same reasoning as in \cref{lem:far-angle}. Therefore, the grid vertices captured by $T$ are exponentially close to $T$. Partitioning the region close to $T$ into subsets of bounded diameter in the same way as in the proof of \autoref{thm:max-planar-vv-angle} shows that, if $T$ captures $k_T$ grid vertices, then one must be at distance $\Omega(k)$ from the incenter and more than unit distance from all vertices of $T$. However, captured vertices are defined as being within distance $\le \gamma n$ of the incenter or unit distance of a vertex of $T$, so $k_T=O(\gamma n)$. The boundary triangulation has $O(n)$ triangles, so the total number of captured vertices over all triangles is $O(\gamma n^2)$. If we choose $\gamma$ to be a sufficiently small constant (depending on the vertex-vertex resolution), this number will be less than the number $(n-2)^2$ of interior vertices of the grid, and at least one interior vertex $v$ will remain uncaptured.

Each of the $O(n)$ steps in the two curves from $v$ to the boundary has exponentially-small length, so both paths are exponentially short. Therefore, $v$ is sandwiched between two exponentially-close edges of the boundary, and moreover is at least unit distance from the edge endpoints. Because $v$ is an interior vertex of the grid, it has degree four, and has at least two edges extending in at least one of the two directions approximately parallel to the boundary edges that sandwich~$v$. These two edges must form an exponentially small angle at~$v$.
\end{proof}

\section{Nonplanar drawings}

Beyond planar graph drawing, any $n$-vertex graph has a Euclidean drawing with unit vertex-vertex resolution and angular resolution $\Theta(1/n)$, with vertices on a unit regular $n$-gon. The analogous placement in hyperbolic geometry has angular resolution $\Theta(1/n^2)$. In this section, we prove that this is optimal: every hyperbolic drawing of $K_n$ with unit vertex-vertex resolution has angular resolution $O(1/n^2)$.

A key ingredient is the relation between hyperbolic circle area and perimeter:

\begin{lemma}
\label{lem:perim-area}
Every hyperbolic circle of perimeter $p>1$ has area $p+o(p)$.
\end{lemma}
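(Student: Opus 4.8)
The plan is to reduce the statement to the two standard closed-form expressions for a hyperbolic circle of radius $r$ (in the curvature $-1$ normalization): its perimeter is $p=2\pi\sinh r$ and its area is $A=2\pi(\cosh r-1)$. Both follow from integrating the hyperbolic metric in geodesic polar coordinates, where the line element yields circumference $2\pi\sinh r$ at radius $r$, and integrating that circumference from $0$ to $r$ gives the area; I would cite these as standard rather than rederive them.

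The heart of the argument is then purely algebraic, using the identity $\cosh r-\sinh r=e^{-r}$. Substituting this into the area formula gives
\[
A=2\pi(\cosh r-1)=2\pi\sinh r+2\pi e^{-r}-2\pi=p-2\pi+2\pi e^{-r}.
\]
Since $0<e^{-r}<1$ for every $r>0$, the difference $A-p=-2\pi+2\pi e^{-r}$ lies in the bounded interval $(-2\pi,0)$, uniformly in $r$. The hypothesis $p>1$ corresponds to a circle large enough that $p$ may grow without bound (recall $p=2\pi\sinh r\to\infty$ exactly as $r\to\infty$), and because the error $A-p$ is $O(1)$ it is in particular $o(p)$, establishing $A=p+o(p)$.

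I do not anticipate a genuine obstacle: once the two elementary formulas are in hand, the identity $\cosh r-\sinh r=e^{-r}$ collapses the area--perimeter relation to $A=p-2\pi+2\pi e^{-r}$, and the conclusion is immediate. The only point meriting a word of care is the interpretation of the error term. The exact relation is sharper than the stated bound, namely $A=p-2\pi+o(1)$, and I would record this precise form before reading off the weaker asymptotic $p+o(p)$ that is needed downstream. As a sanity check, the additive constant $-2\pi$ is precisely the Gauss--Bonnet defect one expects for a disk in the hyperbolic plane, which confirms that no constant has been dropped.
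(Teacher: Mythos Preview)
Your proposal is correct and follows essentially the same approach as the paper: both invoke the standard closed-form expressions for hyperbolic circumference $2\pi\sinh r$ and area (the paper writes it as $4\pi\sinh^2(r/2)$, which equals your $2\pi(\cosh r-1)$) and read off the asymptotic. Your algebraic step via $\cosh r-\sinh r=e^{-r}$ yields the sharper $A=p-2\pi+2\pi e^{-r}$, whereas the paper simply notes that the ratio of the two formulas tends to~$1$; both arrive at the same conclusion.
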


\begin{proof}
This follows immediately from the formulas for the area $4\pi\sinh^2(r/2)$
and perimeter $2\pi\sinh r$ of a hyperbolic circle of radius~$r$~\cite{Ben-JOMA-01}. In the limit as $r$ becomes large, the ratio of these two formulas converges to~$1$.
\end{proof}

We also need the following counterintuitive property of hyperbolic geometry, which in this respect is very different from Euclidean geometry:

\begin{lemma}
\label{lem:quadrants}
Divide the hyperbolic plane into four quadrants by two perpendicular lines. Then every line from a point in one quadrant to a point in the opposite quadrant passes within distance $\ln(1+\sqrt{2})\approx 0.8814$ of the crossing.
\end{lemma}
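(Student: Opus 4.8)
The plan is to work in the upper halfplane model, placing the crossing of the two perpendicular lines at a convenient point and exploiting the symmetry of the configuration to reduce to a one-parameter extremal problem. First I would set up coordinates so that the crossing point is at the apex of a symmetric configuration: put one of the two perpendicular lines as the vertical line $x=0$ (a hyperbolic geodesic) and the other as the unit semicircle centered at the origin, so that they meet perpendicularly at the point $(0,1)$. The four quadrants are then the four regions cut out by these two geodesics. A point in one quadrant and a point in the opposite quadrant are related by the isometry that is the composition of the reflections across both geodesics, i.e.\ the point reflection (rotation by $\pi$) about the crossing $(0,1)$. The geodesic joining a point to its antipode passes through the crossing, achieving distance $0$; the worst case is for a general pair of opposite-quadrant points, so I would parametrize the two endpoints and minimize the distance from the crossing to the connecting geodesic.

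The key reduction is to realize that among all lines from one quadrant to the opposite quadrant, the one that stays \emph{farthest} from the crossing is a symmetric extremal line: by the reflective symmetry across each of the two perpendicular geodesics, the extremal connecting geodesic should be symmetric under the point reflection about the crossing, and its point of closest approach to the crossing is where we measure the bound. I would compute the hyperbolic distance from the crossing $(0,1)$ to a candidate geodesic — itself a semicircle in the upper halfplane model — using the standard formula for the distance from a point to a geodesic, and then maximize over the family of admissible connecting geodesics. The constant $\ln(1+\sqrt 2) = \operatorname{arcsinh}(1)$ strongly suggests that the extremal geodesic is the one making a $45^\circ$ angle configuration, where the relevant right-triangle relation (via \cref{lem:tangent-rule}) collapses to $\sinh(\text{distance}) = 1$, giving distance $\operatorname{arcsinh}(1)=\ln(1+\sqrt 2)$.

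Concretely, the main computation is to show that any geodesic separating one quadrant from the opposite quadrant must cross the \emph{strip} bounded by the two asymptotic lines at distance $\ln(1+\sqrt2)$ on each side of the crossing, and that the extreme case is attained in the limit where the connecting geodesic's endpoints recede to infinity within their respective quadrants. The hard part will be verifying rigorously that the supremum of the closest-approach distance, taken over \emph{all} opposite-quadrant point pairs, is exactly this limiting value rather than being exceeded by some interior configuration — in other words, confirming that the distance is monotonically largest in the ideal (endpoints-at-infinity) limit. I expect to handle this by the following monotonicity argument: pushing either endpoint farther from the crossing along its quadrant can only move the connecting geodesic farther from the crossing, so the supremum is achieved in the limit of ideal endpoints, at which point the symmetric configuration reduces to the explicit right-triangle identity $\sinh(\text{distance})=1$ and yields the claimed constant.
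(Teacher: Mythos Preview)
Your proposal actually contains the decisive idea—the ``strip bounded by the two asymptotic lines''—but you treat it as a side remark and put the weight on an extremal/monotonicity argument that does not work as stated and is not needed.

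The monotonicity claim (``pushing either endpoint farther from the crossing can only move the connecting geodesic farther from the crossing'') is false without further qualification. Take one endpoint at an ideal corner of its quadrant and slide the other endpoint along the ideal boundary of the opposite quadrant: at one extreme the connecting geodesic is an asymptotic line at distance $\ln(1+\sqrt2)$ from the crossing, and at the other extreme it \emph{is} one of the two given perpendicular lines, passing through the crossing at distance~$0$. Both endpoints are ``as far as possible'' (at infinity), yet the distance to the crossing ranges over $[0,\ln(1+\sqrt2)]$. So the reduction to ideal endpoints does not by itself isolate the extremal geodesic, and the symmetry heuristic (``the extremal geodesic should be symmetric under the point reflection'') is not a proof.

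The paper's proof is exactly the convex-hull observation you mention in passing, carried out directly. With both perpendicular lines represented as congruent semicircles (centers $(\pm1,0)$, radius $\sqrt2$, crossing at $(0,1)$), the two opposite quadrants lie in the half-planes $x\le 0$ and $x\ge 0$ respectively. Their convex hull is bounded by the two common asymptotic lines, which in these coordinates are the semicircles of radii $\sqrt2-1$ and $1+\sqrt2$ about the origin. Any geodesic from one quadrant to the other lies in this hull and must cross the vertical geodesic $x=0$; on that geodesic the hull is the segment from $(0,\sqrt2-1)$ to $(0,1+\sqrt2)$, every point of which is within hyperbolic distance $\ln(1+\sqrt2)$ of $(0,1)$. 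No optimization, no monotonicity, no appeal to \cref{lem:tangent-rule} is required. Your coordinate choice (one vertical line, one unit semicircle) is legitimate but breaks the left--right symmetry and makes this computation messier; the paper's symmetric placement is what makes the computation a one-liner.
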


\begin{proof}
Choose an upper halfplane model of the hyperbolic plane that represents the two perpendicular lines as congruent semicircles, and the two opposite quadrants crossed by any given line as the two congruent regions to the left and right of their crossing (\cref{fig:double-wedge-hull}). The convex hull of the two quadrants is bounded by two more hyperbolic lines, asymptotic to the crossing lines (the boundaries of the light yellow region in the figure). All lines from one quadrant to another remain within the hull, crossing the red circle in the figure.

The figure may be given Cartesian coordinates in which the semicircles representing the two perpendicular lines have (Euclidean) centers at the points $(-1,0)$ and $(1,0)$ and cross at the point $(0,1)$, the hyperbolic center of the red circle. With these coordinates, these semicircles have Euclidean radius $\sqrt{2}$. The top point of the red circle is at $(0,1+\sqrt{2})$, and the result follows from the formula $\ln(y_2/y_1)$ for the hyperbolic length of a vertical line segment.
\end{proof}

\begin{figure}[t]
\centering\includegraphics[width=0.5\textwidth]{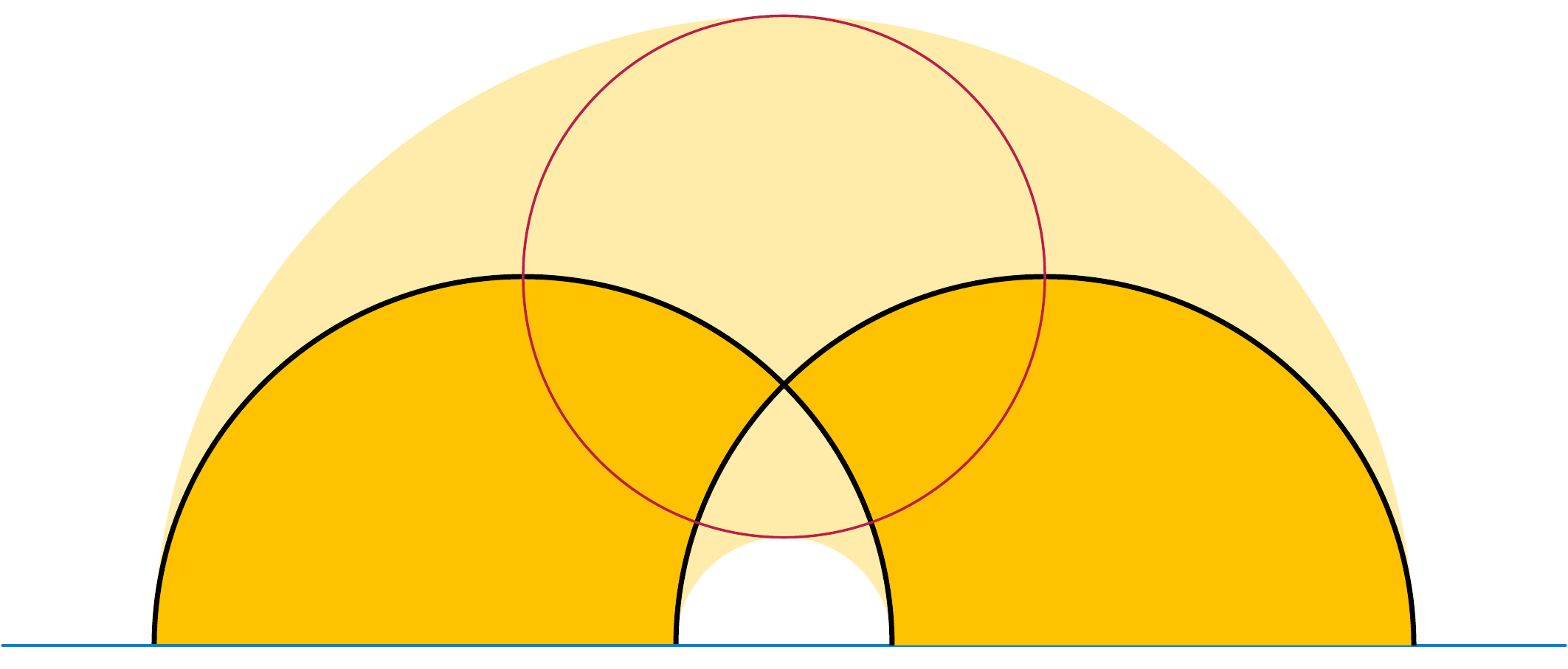}
\caption{Illustration for \cref{lem:quadrants}. Two perpendicular lines (black) in the upper halfplane model (above the blue line) determine two opposite quadrants (dark yellow) whose convex hull (light yellow) is bounded by two lines asymptotic to the two perpendicular lines. All lines from the left quadrant to the right quadrant remain within the hull and pass through the red circle, of radius $\ln(1+\sqrt{2})$.}
\label{fig:double-wedge-hull}
\end{figure}

\begin{theorem}
\label{thm:complete-angles}
For every constant $c$, every hyperbolic drawing of the complete graph $K_n$ with vertex-vertex resolution $\ge c$ has angular resolution $O(1/n^2)$.
\end{theorem}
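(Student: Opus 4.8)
The plan is to locate a single vertex $u$ lying far from some bounded disk $D$, through which $\Omega(n)$ of the edges incident to $u$ are forced to pass; because a bounded disk viewed from far away subtends a tiny angle, these edges are squeezed into an angular cone of width $O(1/n)$ at $u$, and pigeonholing $\Omega(n)$ of them into such a cone yields two edges meeting at angle $O(1/n^2)$. Two quantitative ingredients drive this. First, a packing bound: since the vertices are pairwise at distance $\ge c$, the disjoint radius-$c/2$ disks centered at those vertices lying in a ball of radius $R$ have total area $O(e^R)$, so any ball of radius $R$ contains $O(e^R)$ vertices, with the hidden constant depending on $c$. Second, a cone-compression estimate: a disk of fixed radius $\rho_0$ centered at a point $q$ subtends, as seen from a vertex $u$ at distance $\delta=d(u,q)$, a cone of half-width $\arcsin(\sinh\rho_0/\sinh\delta)=O(e^{-\delta})$, via the right-triangle relation for the two tangent lines from $u$ to the disk.

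To produce the funnel I will invoke \cref{lem:quadrants}: if two perpendicular geodesics cross at a point $p$, then every edge joining a vertex in one quadrant to a vertex in the opposite quadrant passes within $\rho_0=\ln(1+\sqrt{2})$ of $p$. Thus it suffices to choose the crossing $p$ and the orientation of the cross so that two opposite quadrants $Q_1,Q_3$ each contain $\Omega(n)$ vertices; then for any $u\in Q_1$, all $\Omega(n)$ edges from $u$ to $Q_3$ pass through the disk $D=B(p,\rho_0)$ and hence lie in the cone of width $O(e^{-d(u,p)})$ above. The packing bound guarantees that $Q_1$, having $\Omega(n)$ vertices, contains some $u$ with $d(u,p)\ge\log n-O(1)$: at most $O(e^R)$ of its vertices lie within radius $R$ of $p$, and this is smaller than $\Omega(n)$ once $R=\log n-O(1)$. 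For such a $u$ the cone width is $O(1/n)$.

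The main obstacle is guaranteeing that both opposite quadrants are heavy; a single heavy halfplane is trivial, but a heavy quadrant whose antipode is empty is useless. I expect to settle this with the centerpoint theorem, which transfers verbatim to the hyperbolic plane through the Klein model: there geodesics are Euclidean chords, so hyperbolic and Euclidean convexity coincide, Helly's theorem applies, and one obtains a point $p$ such that every halfplane bounded by a geodesic through $p$ holds at least $n/3$ vertices. Fixing $p$ and rotating the cross, averaging gives an orientation whose quadrant $W(\theta)$ holds $\ge n/4$ vertices; were $W(\theta+\pi)$ and one of the two remaining quadrants both lighter than $\varepsilon n$, then two adjacent light quadrants would unite to a halfplane with fewer than $2\varepsilon n<n/3$ vertices, contradicting the centerpoint property for $\varepsilon<\tfrac16$. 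Hence some orientation makes both $Q_1$ and $Q_3$ carry $\Omega(n)$ vertices; the finitely many vertices lying exactly on the two lines are cleared by an arbitrarily small rotation. Combining everything, the $\Omega(n)$ directions from $u$ toward $Q_3$ all lie in an interval of length $O(1/n)$, so two of them differ by $O(1/n^2)$, giving two edges at $u$ meeting at angle $O(1/n^2)$ and establishing the claimed bound on angular resolution.
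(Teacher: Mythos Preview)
Your argument is correct and follows the same architecture as the paper's proof: construct a perpendicular cross whose two opposite quadrants each contain $\Omega(n)$ vertices, invoke \cref{lem:quadrants} to force all cross-quadrant edges through the fixed disk of radius $\ln(1+\sqrt 2)$, use a packing/area bound to find a vertex $u$ in one heavy quadrant at distance $\Omega(\log n)$ from the crossing, compute that the disk subtends an $O(1/n)$ angle at $u$, and pigeonhole the $\Omega(n)$ edges from $u$ to the opposite quadrant.

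The one substantive difference is in how the cross is produced. The paper does this more directly: pick any geodesic bisecting the $n$ points, then sweep a perpendicular geodesic along it and apply the intermediate value theorem to get a second bisecting line; with both halfspace counts equal to $n/2$, opposite quadrants automatically have equal counts, so some opposite pair carries $\ge n/4$ each. Your route through the centerpoint theorem (transferred via the Klein model) plus a rotation/case analysis also works and yields opposite quadrants of size $\ge n/7$, but it imports Helly's theorem where a one-dimensional IVT suffices. Both reach the same conclusion; the paper's construction is just lighter-weight.
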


\begin{proof}
By ignoring one vertex if necessary, assume without loss of generality that $n$ is even. Consider any drawing with vertex-vertex resolution $\ge c$, find a hyperbolic line splitting the vertices of the drawing into equal subsets, and (by applying the intermediate value theorem to the partitions by perpendicular lines) find a second perpendicular line splitting the vertices into two equal subsets. Let $x$ be the crossing point of these two lines. Among the four quadrants formed by these two lines, opposite quadrants necessarily have equal numbers of points, so two opposite quadrants $Q$ and $Q'$ each contain at least $n/4$ vertices.

Within $Q$, because of the vertex spacing, not all vertices can be in a quarter-circle centered at $r$ of area $o(n)$. Therefore, by \cref{lem:perim-area}, for some vertex $v$ in $Q$, a circle of radius $xv$ has perimeter $\Omega(n)$. If we center such a circle at $v$, then the circle of radius $\ln(1+\sqrt{2})$ centered at $x$ (the red circle in \cref{fig:double-wedge-hull}) spans only a constant number of units of the perimeter, an $O(1/n)$ fraction of the total perimeter, so the angle spanned by the red circle as viewed from $v$ is $O(1/n)$.

Within this $O(1/n)$ angle as viewed from $v$, at least $n/4$ vertices in $Q'$ are also visible, by \cref{lem:quadrants}. By the pigeonhole principle, some two of these vertices in $Q'$ must be within an angle of $O(1/n^2)$ of each other as viewed from $v$.
\end{proof}

Following van Kreveld~\cite{Kre-CGTA-11} we define a \emph{bold} hyperbolic drawing to draw vertices as hyperbolic disks of a given radius and edges as thickened hyperbolic line segments of width less than this radius, with all edges and all vertices having part of their boundary visible.

\begin{theorem}
\label{thm:complete-bold}
Any bold hyperbolic drawing of $K_n$ has edge width $O(1/n)$.
\end{theorem}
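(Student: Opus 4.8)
The plan is to reuse the crowded configuration produced by the proof of \cref{thm:complete-angles} and then turn that crowding into a width bound by a packing count. First I would observe that a bold drawing already forces a lower bound on vertex spacing: since every pair of vertices of $K_n$ is adjacent and each edge must show a portion outside the two disks it joins, the centers of any two vertices lie at hyperbolic distance greater than $2r$, where $r$ is the fixed vertex radius. Thus the center points form a point set of vertex-vertex resolution $\Omega(1)$, and \cref{thm:complete-angles} applies to them. Re-running its proof supplies a vertex $v$, a crossing point $x$ of two bisecting perpendiculars with $\sinh(xv)=\Omega(n)$ (a circle of radius $xv$ must have perimeter $\Omega(n)$ to hold the $\Omega(n)$ well-separated vertices on $v$'s side, by \cref{lem:perim-area}), and an opposite quadrant $Q'$ holding at least $n/4$ vertices. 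By \cref{lem:quadrants}, the $n/4$ edges from $v$ into $Q'$ all pass within bounded distance of $x$, hence lie in the cone of angular width $O(1/n)$ that the disk $D$ of radius $\ln(1+\sqrt{2})$ about $x$ subtends at $v$.

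Next I would take a transversal $\Sigma$, a hyperbolic segment meeting all $n/4$ of these edges. At the relevant distance from $v$, where $\sinh$ is $\Theta(n)$, an $O(1/n)$ cone has cross-section of length $O(1)$, so $\Sigma$ has length $O(1)$, while each edge, being a strip of width $w$, covers a subsegment of $\Sigma$ of length at least $w$. If these subsegments were pairwise essentially disjoint, the bound $(n/4)\,w=O(1)$, that is, $w=O(1/n)$, would follow at once; so the entire difficulty is concentrated in controlling their overlap.

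The hard part will be exactly this overlap control, and it is genuinely subtle: \cref{lem:quadrants} forces all $n/4$ edges to pass near $x$, so on any transversal through $D$ they are piled up to depth $\Theta(n)$, and the bold-drawing requirement only asks that each edge be exposed \emph{somewhere}, not inside $D$. Counting raw strips therefore fails. The resolution I would pursue is to count exposed boundary rather than raw strips, charging the width of each edge to a disjoint visible arc located where the edges are genuinely pulled apart, namely near their well-separated endpoints in $Q'$. Concretely, I would argue that each edge must become exposed before reaching its endpoint, that distinct features contribute disjoint arcs to the boundary of the union, and that the $\Omega(1)$ endpoint separation forces each such arc to consume length $\Omega(w)$; bounding the total boundary length available to these disjoint arcs through \cref{lem:perim-area} would then return $(n/4)\,w=O(1)$. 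Turning this charging scheme into a rigorous accounting, so that it is the visible portions and not the overlapping strips that are summed, is the technical heart of the argument.
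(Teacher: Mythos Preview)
Your charging scheme has a real obstruction, not just a technical gap. You focus on a single vertex $v$ and its $n/4$ edges into $Q'$; but for edges emanating from a common point, the bold-drawing requirement imposes \emph{no} width constraint by itself. Near $v$ and near the bottleneck disk those strips pile up, yet each one is genuinely exposed out near its own endpoint in $Q'$, where the endpoints are $\Omega(1)$-separated. The ``boundary length available'' to those exposed arcs is therefore $\Omega(n)$, not $O(1)$: the $n/4$ well-separated endpoints already occupy a region of perimeter $\Omega(n)$ (this is exactly what \cref{lem:perim-area} says, not the opposite), so summing $n/4$ disjoint arcs of size $\Omega(w)$ only gives the vacuous $(n/4)\,w = O(n)$. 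No transversal helps: on any $O(1)$-length cross-section the strips overlap to depth $\Theta(n)$, and where they finally separate the cross-section has length $\Theta(n)$. A single fan of $n/4$ edges simply does not contain enough information for an $O(1/n)$ bound.

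The paper avoids this by never singling out one vertex. It takes the fixed diameter $D$ of the bounded disk at the crossing and looks at \emph{all} $(n/4)^2$ edges from $Q$ to $Q'$, each of which must cross $D$ by \cref{lem:quadrants}. For each vertex $u$ (on either side), among its $\ge n/4$ edges across $D$ only $O(1/w)$ can have boundary exposed on $u$'s side of $D$: any such edge is not covered there by the other strips through $u$, which forces it to occupy its own width-$w$ subinterval of the $O(1)$-length segment $D$. Summing over the $O(n)$ vertices gives $O(n/w)$ edges that are exposed on one side of $D$ or the other; since every edge must be exposed somewhere, $(n/4)^2 = O(n/w)$ and hence $w=O(1/n)$. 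The missing idea is this double count across all $\Theta(n^2)$ cross-quadrant edges against a single bounded transversal, rather than trying to squeeze the bound out of one vertex's fan. (Your opening observation that a bold drawing of $K_n$ forces $\Omega(1)$ center spacing is essentially correct but turns out to be unnecessary: the paper's argument never uses vertex separation, only the common transversal $D$.)
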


\begin{proof}
Let the edge width of a drawing be $w<1$. As in the proof of \cref{thm:complete-angles}, find two perpendicular lines partitioning the centers of the vertices into quadrants, with two opposite quadrants $Q$ and $Q'$ containing at least $n/4$ vertex centers each; let $C$ be the circle of radius $\ln 1+\sqrt{2}$ centered at the crossing point, and let $D$ be the diameter of $C$ halfway between $Q$ and $Q'$. Each vertex in $Q$ has $\ge n/4$ edges to $Q'$, all crossing $D$, among which $O(1/w)$ can have an exposed portion of boundary between the vertex and $D$, because each edge whose boundary is not entirely covered by edges from the same vertex covers a segment of $D$ of length at least $w$. Similarly, each vertex in $Q'$ has $O(1/w)$ edges with exposed boundaries. Unless $w$ is $O(1/n)$, the total number of edges with exposed boundaries either near their endpoint in $Q$ or near their endpoint in $Q'$ will be less than the $(n/4)^2$ number of edges from $Q$ to $Q'$, and at least one edge will be totally covered.
\end{proof}

\section{Conclusions}
We have performed an initial investigation into hyperbolic graph drawing under realistic graph drawing models, showing that for many variations of these models, drawings are impossible or seriously limited. Other questions in this area, which we leave open for future research, include:

\begin{itemize}
\item Which planar graphs have planar hyperbolic drawings with bounded vertex-edge resolution? These include all trees, and all outerplanar graphs (using a placement of vertices on a large regular polygon), but not all planar graphs and not even all bounded-bandwidth series-parallel graphs. Are there other natural classes of planar graphs that always have such drawings?

\item For Euclidean planar drawings, edge-edge resolution is not usually studied separately, because it is essentially the same as vertex-edge resolution. However, in the hyperbolic plane, edges may approach each other closely even when all vertex-edge pairs are well separated. Does this cause differences between hyperbolic vertex-edge resolution and edge-edge resolution?

\item RAC graphs (graphs drawn with right-angle crossings) are motivated by realistic graph drawing: crossings with high angles are easier to understand than sharp crossing angles~\cite{DidEadLio-TCS-11}. However their definition strongly depends on geometry: edges that cross at right angles in the Euclidean plane have bipartite intersection graphs such as cycles of four edges. In the hyperbolic plane, odd cycles of right-angle-crossing edges are possible; however, 4-cycles are not possible. How do these differences affect the hyperbolic RAC graphs?
\end{itemize}

Our results should not be interpreted as shutting off research on hyperbolic graph drawing, which remains important in applications such as greedy routing where realistic drawing assumptions do not fit the problem, and as a building block for Euclidean drawing methods such as Lombardi drawing.

\bibliographystyle{splncs04}
\bibliography{hyperbolic}

\begin{thebibliography}{10}
\providecommand{\url}[1]{\texttt{#1}}
\providecommand{\urlprefix}{URL }
\providecommand{\doi}[1]{https://doi.org/#1}

\bibitem{BarGooRil-JGAA-04}
Barequet, G., Goodrich, M.T., Riley, C.: {Drawing planar graphs with large
  vertices and thick edges}. J. Graph Algorithms Appl.  \textbf{8},  3{--}20
  (2004). \doi{10.7155/jgaa.00078}

\bibitem{Ben-JOMA-01}
Bennett, A.G.: {Hyperbolic geometry}. J. Online Math. Appl.  \textbf{2} (2001),
  \url{https://www.maa.org/press/periodicals/loci/joma/hyperbolic-geometry-introduction}

\bibitem{BerEpp-WADS-01}
Bern, M., Eppstein, D.: {Optimal M{\"o}bius transformations for information
  visualization and meshing}. In: Dehne, F.K.H.A., Sack, J.R., Tamassia, R.
  (eds.) Proc. 7th Worksh. Algorithms and Data Structures (WADS 2001). Lect.
  Notes Comput. Sci., vol.~2125, pp. 14{--}25. Springer-Verlag (August 2001).
  \doi{10.1007/3-540-44634-6\_3}

\bibitem{BlaFriKat-JEA-20}
Bl{\"a}sius, T., Friedrich, T., Katzmann, M., Krohmer, A.: {Hyperbolic
  embeddings for near-optimal greedy routing}. ACM J. Exp. Algorithmics
  \textbf{25}(A1.3),  1{--}18 (2020). \doi{10.1145/3381751}

\bibitem{BlaFriKro-TN-18}
Bl{\"a}sius, T., Friedrich, T., Krohmer, A., Laue, S.: {Efficient embedding of
  scale-free graphs in the hyperbolic plane}. IEEE/ACM Trans. Netw.
  \textbf{26}(2),  920{--}933 (2018). \doi{10.1109/TNET.2018.2810186}

\bibitem{Bra-TGGT-08}
Brandenburg, F.J.: {Drawing planar graphs on $\tfrac{8}{9} n^2$ area}. In:
  Proc. Int. Conf. Topological and Geometric Graph Theory. Electron. Notes
  Discret. Math., vol.~31, pp. 37{--}40 (2008).
  \doi{10.1016/j.endm.2008.06.005}

\bibitem{CouDuc-Eurocomb-17}
Coudert, D., Ducoffe, G.: {A simple approach for lower-bounding the distortion
  in any Hyperbolic embedding}. In: Drmota, M., Kang, M., Krattenthaler, C.,
  Ne{\v{s}}et{\v{r}}il, J. (eds.) Proc. Eur. Conf. Combinatorics, Graph Theory
  and Applications (EUROCOMB'17). Electron. Notes Discret. Math., vol.~61, pp.
  293{--}299 (2017). \doi{10.1016/j.endm.2017.06.051}

\bibitem{DidEadLio-TCS-11}
Didimo, W., Eades, P., Liotta, G.: {Drawing graphs with right angle crossings}.
  Theor. Comput. Sci.  \textbf{412}(39),  5156{--}5166 (2011)

\bibitem{DolLeiTri-ACR-84}
Dolev, D., Leighton, F.T., Trickey, H.: {Planar embedding of planar graphs}.
  Advances in Computing Research  \textbf{2},  147{--}161 (1984),
  \url{https://noodle.cs.huji.ac.il/~dolev/pubs/planar-embed.pdf}

\bibitem{DunEfrKob-IJGCS-06}
Duncan, C.A., Efrat, A., Kobourov, S.G., Wenk, C.: {Drawing with fat edges}.
  Int. J. Found. Comput. Sci.  \textbf{17}(5),  1143{--}1164 (2006).
  \doi{10.1142/S0129054106004315}

\bibitem{DunEppGoo-JGAA-12}
Duncan, C.A., Eppstein, D., Goodrich, M.T., Kobourov, S.G., N{\"o}llenburg, M.:
  {Lombardi drawings of graphs}. J. Graph Algorithms Appl.  \textbf{16}(1),
  85{--}108 (2012). \doi{10.7155/jgaa.00251}

\bibitem{DunEppGoo-DCG-13}
Duncan, C.A., Eppstein, D., Goodrich, M.T., Kobourov, S.G., N{\"o}llenburg, M.:
  {Drawing trees with perfect angular resolution and polynomial area}. Discrete
  Comput. Geom.  \textbf{49}(2),  157{--}182 (2013).
  \doi{10.1007/s00454-012-9472-y}

\bibitem{EklRobGre-CW-02}
Eklund, P.W., Roberts, N., Green, S.: {OntoRama: Browsing RDF ontologies using
  a hyperbolic-style browser}. In: Proc. 1st Int. Symp. Cyber Worlds (CW 2002).
  pp. 405{--}411. IEEE Computer Society (2002). \doi{10.1109/CW.2002.1180907}

\bibitem{Epp-DCG-14}
Eppstein, D.: {A M{\"o}bius-invariant power diagram and its applications to
  soap bubbles and planar Lombardi drawing}. Discrete Comput. Geom.
  \textbf{52}(3),  515{--}550 (2014). \doi{10.1007/s00454-014-9627-0}, special
  issue for SoCG 2013

\bibitem{EppGoo-TC-11}
Eppstein, D., Goodrich, M.T.: {Succinct greedy geometric routing using
  hyperbolic geometry}. IEEE Trans. Comput.  \textbf{60}(11),  1571{--}1580
  (2011). \doi{10.1109/TC.2010.257}

\bibitem{ForHagHar-SICOMP-93}
Formann, M., Hagerup, T., Haralambides, J., Kaufmann, M., Leighton, F.T.,
  Symvonis, A., Welzl, E., Woeginger, G.: {Drawing graphs in the plane with
  high resolution}. SIAM J. Comput.  \textbf{22}(5),  1035{--}1052 (1993).
  \doi{10.1137/0222063}

\bibitem{FraPacPol-IJC-1990}
de~Fraysseix, H., Pach, J., Pollack, R.: {How to draw a planar graph on a
  grid}. Combinatorica  \textbf{10}(1),  41{--}51 (1990).
  \doi{10.1007/BF02122694}

\bibitem{GarTam-ESA-94}
Garg, A., Tamassia, R.: {Planar drawings and angular resolution: Algorithms and
  bounds}. In: Proc. 2nd Eur. Symp. on Algorithms (ESA 1994). Lect. Notes
  Comput. Sci., vol.~855, pp. 12{--}23. Springer-Verlag (1994).
  \doi{10.1007/BFb0049393}

\bibitem{Kle-INFOCOM-07}
Kleinberg, R.: {Geographic routing using hyperbolic space}. In: Proc. 26th IEEE
  Int. Conf. Computer Communications (INFOCOM 2007). pp. 1902{--}1909. IEEE
  (2007). \doi{10.1109/INFCOM.2007.221}

\bibitem{KobWam-TVCG-05}
Kobourov, S.G., Wampler, K.: {Non-Euclidean spring embedders}. IEEE Trans. Vis.
  Comput. Graph.  \textbf{11}(6),  757{--}767 (2005).
  \doi{10.1109/TVCG.2005.103}

\bibitem{Kre-CGTA-11}
van Kreveld, M.J.: {Bold graph drawings}. Comput. Geom.  \textbf{44}(9),
  499{--}506 (2011). \doi{10.1016/j.comgeo.2011.06.002}

\bibitem{LamRao-JVLC-96}
Lamping, J., Rao, R.: {The hyperbolic browser: A focus + context technique for
  visualizing large hierarchies}. J. Vis. Lang. Comput.  \textbf{7}(1),
  33{--}55 (1996). \doi{10.1006/jvlc.1996.0003}

\bibitem{MalPap-SIDMA-94}
Malitz, S., Papakostas, A.: {On the angular resolution of planar graphs}. SIAM
  J. Discrete Math.  \textbf{7}(2),  172{--}183 (1994).
  \doi{10.1137/S0895480193242931}

\bibitem{Mar-82}
Martin, G.E.: {The Foundations of Geometry and the Non-Euclidean Plane}.
  Undergraduate Texts in Mathematics, Springer-Verlag (1982)

\bibitem{Moh-GD-99}
Mohar, B.: {Drawing graphs in the hyperbolic plane}. In: Kratochv{\'\i}l, J.
  (ed.) Proc. 7th Int. Symp. Graph Drawing (GD 1999). Lect. Notes Comput. Sci.,
  vol.~1731, pp. 127{--}136. Springer (1999). \doi{10.1007/3-540-46648-7\_13}

\bibitem{Mun-InfoVis-97}
Munzner, T.: {H3: Laying out large directed graphs in 3D hyperbolic space}. In:
  Proc. 1997 IEEE Symposium on Information Visualization (InfoVis 1997). pp.
  2{--}10. IEEE Computer Society (1997). \doi{10.1109/INFVIS.1997.636718}

\bibitem{Mun-CGA-98}
Munzner, T.: {Exploring large graphs in 3D hyperbolic space}. IEEE Comput.
  Graph. Appl.  \textbf{18}(4),  18{--}23 (1998). \doi{10.1109/38.689657}

\bibitem{MunBur-VRML-95}
Munzner, T., Burchard, P.: {Visualizing the structure of the world wide web in
  3D hyperbolic space}. In: Nadeau, D.R., Moreland, J.L. (eds.) Proc. 1995
  Symp. Virtual Reality Modeling Language (VRML 1995). pp. 33{--}38. ACM
  (1995). \doi{10.1145/217306.217311}

\bibitem{Pac-JGAA-15}
Pach, J.: {Every graph admits an unambiguous bold drawing}. J. Graph Algorithms
  Appl.  \textbf{19}(1),  299{--}312 (2015). \doi{10.7155/jgaa.00359}

\bibitem{Pet-20}
Petrunin, A.: {Euclidean plane and its relatives; a minimalist introduction}.
  CreateSpace, 3rd edn. (December 2020),
  \url{https://arxiv.org/abs/1302.1630v18}

\bibitem{SalDeSGu-ICML-18}
Sala, F., De~Sa, C., Gu, A., R{\'e}, C.: {Representation tradeoffs for
  hyperbolic embeddings}. In: Dy, J.G., Krause, A. (eds.) Proc. 35th Int. Conf.
  Machine Learning (ICML 2018). Proceedings of Machine Learning Research,
  vol.~80, pp. 4457{--}4466. ML Research Press (2018),
  \url{https://proceedings.mlr.press/v80/sala18a.html}

\bibitem{Sar-GD-11}
Sarkar, R.: {Low distortion Delaunay embedding of trees in hyperbolic plane}.
  In: van Kreveld, M.J., Speckmann, B. (eds.) Proc. 19th Int. Symp. Graph
  Drawing (GD 2011). Lect. Notes Comput. Sci., vol.~7034, pp. 355{--}366.
  Springer (2011). \doi{10.1007/978-3-642-25878-7\_34}

\bibitem{Sch-SODA-1990}
Schnyder, W.: {Embedding planar graphs on the grid}. In: Proc. 1st ACM{--}SIAM
  Symp. on Discrete Algorithms (SODA 1990). pp. 138{--}148 (1990),
  \url{https://dl.acm.org/citation.cfm?id=320176.320191}

\bibitem{SuzTakOno-ICML-19}
Suzuki, R., Takahama, R., Onoda, S.: {Hyperbolic disk embeddings for directed
  acyclic graphs}. In: Chaudhuri, K., Salakhutdinov, R. (eds.) Proceedings of
  the 36th International Conference on Machine Learning. Proceedings of Machine
  Learning Research, vol.~97, pp. 6066{--}6075. ML Research Press (2019),
  \url{https://proceedings.mlr.press/v97/suzuki19a.html}

\bibitem{TurBal-JCLC-14}
T{\"u}rker, U.C., Balcisoy, S.: {A visualisation technique for large temporal
  social network datasets in hyperbolic space}. J. Vis. Lang. Comput.
  \textbf{25}(3),  227{--}242 (2014). \doi{10.1016/j.jvlc.2013.10.008}

\bibitem{VerSur-CGTA-16}
Verbeek, K., Suri, S.: {Metric embedding, hyperbolic space, and social
  networks}. Comput. Geom.  \textbf{59},  1{--}12 (2016).
  \doi{10.1016/j.comgeo.2016.08.003}

\bibitem{Wal-IS-04}
Walter, J.A.: {H-MDS: a new approach for interactive visualization with
  multidimensional scaling in the hyperbolic space}. Inf. Syst.
  \textbf{29}(4),  273{--}292 (2004). \doi{10.1016/j.is.2003.10.002}

\bibitem{WalRit-KDD-02}
Walter, J.A., Ritter, H.J.: {On interactive visualization of high-dimensional
  data using the hyperbolic plane}. In: Proc. 8th ACM Int. Conf. Knowledge
  Discovery and Data Mining (SIGKDD 2002). pp. 123{--}132. {ACM} (2002).
  \doi{10.1145/775047.775065}

\end{thebibliography}
\end{document}